\documentclass[12pt]{article}
\usepackage{graphicx}
\usepackage{authblk}
\usepackage{amssymb,amsmath,amsthm}
\usepackage{subfig}
\usepackage{enumerate}
\usepackage{hyperref}
\usepackage{mathtools}
\hypersetup{colorlinks,linkcolor=blue,urlcolor=blue,citecolor=blue}
\usepackage[
    backend=biber,
    style=ext-numeric, 
    articlein=false,     
    giveninits=true,     
    sorting=nyt,    
    maxbibnames=99
]{biblatex}
\DeclareFieldFormat[article]{volume}{\textbf{#1}} 
\newtheorem{theorem}{Theorem}[section]
\newtheorem{proposition}[theorem]{Proposition}
\newtheorem{corollary}[theorem]{Corollary}
\newtheorem{lemma}[theorem]{Lemma}
\newtheorem{example}[theorem]{Example}
\newtheorem{remark}[theorem]{Remark}

\newtheorem{definition}[theorem]{Definition}

\makeatletter
\def\th@plain{%
  \thm@notefont{}
  \itshape 
}
\def\th@definition{%
  \thm@notefont{}
  \normalfont 
}
\makeatother
\begin{document}
 \graphicspath{{./figures/}}

 \title{Scaling Symmetries and Conformal Relative Equilibria on Poisson Manifolds, with Applications to Lie--Poisson Systems}
\author[1]{Manuele Santoprete}
\affil[1,*]{Department of Mathematics, Wilfrid Laurier University, Waterloo ON, Canada}
\affil[*]{Corresponding author: \texttt{msantoprete@wlu.ca}}

\maketitle



\begin{abstract}
We investigate conformal relative equilibria for Hamiltonian systems on exact Poisson manifolds equipped with scaling symmetries. By introducing conformally Poisson actions and conformal momentum maps, we characterize these equilibria through an augmented Hamiltonian formulation; in the nondegenerate case, this recovers the conditions recently developed for the exact symplectic case.

Specializing to Lie--Poisson manifolds, where the natural scaling action canonically provides an exact Poisson structure on the dual of any finite-dimensional Lie algebra, we establish a purely algebraic criterion: a homogeneous Hamiltonian system admits a nontrivial conformal relative equilibrium if and only if the underlying Lie algebra contains a hyperbolic element. This yields a complete classification in dimension three via the Bianchi classification. As a prominent application, we show that nontrivial conformal relative equilibria emerge in the dynamics on $\mathfrak{so}(2,1)^*$, but are strictly obstructed for the classical free rigid body on $\mathfrak{so}(3)^*$.
\end{abstract}
\section{Introduction}
In Hamiltonian mechanics, continuous symmetries play a fundamental role in
the analysis of dynamics, leading to conserved quantities through Noether's
theorem and to reduced descriptions of the phase space through geometric
reduction; see, for example,
\cite{abraham2008foundations,arnold1989mathematical,holm2009geometric,marsden1992lectures,marsden2013introduction}. A central
class of symmetry-generated motions is formed by \textbf{relative equilibria},
namely solutions whose trajectories are entirely generated by a one-parameter
subgroup of the symmetry group. These motions appear as steady rotations,
precessions, or other rigidly organized evolutions, and they occupy a central
place in geometric mechanics and stability theory
\cite{marsden1992lectures,SimoLewisMarsden1991}.

While classical relative equilibria are associated with symmetries preserving
the Hamiltonian and the symplectic form, many important mechanical systems
also possess \textbf{scaling symmetries}. Such symmetries arise, for instance,
in Kepler-type and \(n\)-body problems, where spatial rescaling combined with
time reparametrization leaves the equations of motion invariant---a property classically known as mechanical similarity 
\cite{landau1976mechanics,bravetti2023scaling}.  This leads to
the notion of a \textbf{conformal relative equilibrium} (CRE), in which the
vector field aligns with the infinitesimal generator of a scaling action, so
that the corresponding trajectory evolves by pure expansion or contraction \cite{rastelli2026conformal}.
In the classical \(n\)-body problem, these motions are of paramount importance, as they characterize central configurations, 
homothetic collapse, and asymptotic escape trajectories \cite{rastelli2026conformal,saari2005collisions,wintner1941analytical}.

 The geometric reduction of Hamiltonian systems admitting such scaling symmetries has recently garnered significant attention. Notably, it has been shown that symplectic Hamiltonian systems with scaling symmetries can be reduced to contact Hamiltonian systems on a space of one less dimension \cite{bravetti2023scaling}. When scaling symmetries commute with standard symmetries, this reduction process naturally yields Kirillov Hamiltonian systems, including a Kirillov analogue of the Lie--Poisson structure \cite{bravetti2024kirillov}.

Recently, a formal geometric framework for scaling symmetries and conformal relative equilibria was developed in the setting of \textbf{exact symplectic manifolds} \cite{rastelli2026conformal}. By introducing conformally symplectic actions and conformal momentum maps, it was shown that conformal relative equilibria can be characterized by a condition relating the differential of a suitably modified, or ``augmented,'' Hamiltonian to the Liouville one-form. 
However, many fundamental mechanical systems—such as the rigid body, the
heavy top, underwater vehicles, and models from fluid dynamics—are naturally formulated on
\textbf{Poisson manifolds}, often arising as  Lie--Poisson spaces, rather than globally symplectic spaces; see, for
instance, \cite{arnold1998topological,leonard1997stability,lewis1992heavy,marsden1992lectures,ratiu1980motion,weinstein1983local}.
Because Poisson structures are typically degenerate, the exact symplectic
framework cannot be applied globally without modification.

The primary goal of this paper is to extend the geometric theory of scaling symmetries and conformal relative equilibria introduced in \cite{rastelli2026conformal} to the broader setting of \textbf{exact Poisson manifolds}, and to apply this theory to completely classify such equilibria on three-dimensional Lie--Poisson spaces.

The paper is organized as follows. In Section 2, we introduce the basic definitions of conformally invariant vector fields and conformally Poisson actions, showing that exact Poisson manifolds provide the natural geometric setting for Hamiltonian scaling symmetries. 
In Section 3, we define the conformal momentum map for exact Poisson manifolds. A striking feature of the degenerate case is the prominent role played by Casimir functions. We prove a sufficient condition (Proposition \ref{prop:casimir-momentum}): when the infinitesimal generator of the scaling action is proportional to the Liouville vector field, the conformal momentum map is necessarily a Casimir function. We also provide an explicit counterexample of a non-Casimir momentum map to illustrate the subtleties of the degenerate setting. In Section 4, we show that conformal relative equilibria can be found as zeros of an augmented conformal Hamiltonian vector field, recovering the exact symplectic conditions of \cite{rastelli2026conformal} in the non-degenerate limit.

The second half of the paper specializes this general theory to \textbf{Lie--Poisson manifolds}. In Section 5, we show that the dual of any finite-dimensional real Lie algebra $\mathfrak{g}^*$, equipped with its natural scaling action, canonically forms an exact Poisson manifold. Because the momentum map here is always a Casimir, we establish a purely algebraic, coordinate-free criterion for the existence of conformal relative equilibria: a homogeneous Hamiltonian system on $\mathfrak{g}^*$ admits a nontrivial conformal relative equilibrium if and only if the underlying Lie algebra $\mathfrak{g}$ contains a \textbf{hyperbolic element}—an element whose adjoint operator possesses a nonzero real eigenvalue (Proposition \ref{prop:LP-criterion}). This elegant criterion immediately explains why the classical free rigid body on $\mathfrak{so}(3)^*$ strictly forbids conformal relative equilibria, while the rigid body on the non-compact algebra $\mathfrak{so}(2,1)^*$ admits families of them lying on the Casimir light cone.

Finally, in Section 6, we apply our algebraic criterion to achieve a complete classification in dimension three. Utilizing the Ellis--MacCallum formulation of the Bianchi classification \cite{bianchi1897sugli,bianchi1928lezioni,ellis1969class,ryan2015homogeneous,yoshida2017rattleback}, we systematically analyze the adjoint spectrum of every real three-dimensional Lie algebra. 
We conclude that homogeneous Hamiltonian systems with nontrivial conformal relative equilibria can exist precisely for Bianchi types III, IV, V, VI$_h$, VII$_h$ in the appropriate parameter range, and VIII; the Lie--Poisson 
tensors and Casimir functions for all types are collected in Table~\ref{tab:bianchi-JP}.
\section{Scaling Symmetries on Exact Poisson Manifolds}
In this section, we build the geometric foundations for studying scaling symmetries. 
We begin in the general setting of smooth manifolds, establishing what it means for 
vector fields and functions to be conformally invariant under a scaling action, and 
defining conformal relative equilibria as distinguished points on the manifold. 
We then specialize to Hamiltonian systems on Poisson manifolds, showing that 
the presence of a conformal scaling symmetry naturally forces the underlying Poisson 
manifold to carry an exact structure, making exact Poisson manifolds the natural 
geometric setting for the remainder of the paper.

\begin{definition}[Conformally invariant vector field]\label{def:conformally-invariant-vector-field}
Let $M$ be a smooth manifold, and let 
$\Phi : \mathbb{R}^+ \times M \to M$ denote a smooth action of the 
multiplicative group $\mathbb{R}^+$ on $M$. A vector field $X$ on $M$ 
is said to be \textbf{conformally invariant of degree} $a \in \mathbb{R}$ if
\[
     (\Phi_s)_* X = s^a X \quad \text{for all } s \in \mathbb{R}^+.
\]
In the special case $a = 0$, the vector field $X$ is simply called 
\textbf{invariant}.
\end{definition}

\begin{remark}
We emphasize a change of convention with respect to \cite{rastelli2026conformal}. 
There, conformal invariance of vector fields is formulated using pullbacks, whereas 
in the present paper we use the pushforward convention
\[
(\Phi_s)_*X=s^aX.
\]
This choice is made for consistency with the tensorial conventions adopted later in 
the paper. 
\end{remark}

\begin{definition}[Conformally invariant function]
Let $M$ be a smooth manifold, let $\Phi : \mathbb{R}^+ \times M \to M$ 
denote a smooth action of the multiplicative group $\mathbb{R}^+$ on $M$, 
and let $F : M \to \mathbb{R}$ be a smooth function. 
The function $F$ is said to be \textbf{conformally invariant of degree} 
$b \in \mathbb{R}$ if
\[
    \Phi_s^* F = s^b F \quad \text{for all } s \in \mathbb{R}^+.
\]
In the special case $b = 0$, the function $F$ is simply called 
\textbf{invariant}.
\end{definition}

With these invariance properties established, we can now formalize the notion of a 
conformal relative equilibrium. Rather than viewing them as trajectories, we 
characterize them geometrically as specific points where the dynamical vector field 
aligns with the symmetry generator. This is one of several equivalent 
characterizations established in Theorem~2.6 (the Conformal Relative Equilibrium 
Theorem) of~\cite{rastelli2026conformal}; we refer the reader to that work for the 
full statement and proof.

\begin{definition}[Conformal relative equilibrium]\label{def:conformal-relative-equilibrium}
	Let $ M $ be a smooth manifold and $ \Phi: \mathbb{R}  ^{ + } \times M \to M $ a smooth action of the multiplicative group $ \mathbb{R}  ^{ + } $ on $ M $. Let $ X $ be a vector field on $ M $ that is conformally invariant of degree $ a $. A point $ z _e \in M $ is \textbf{called a conformal relative equilibrium} if there exists $ \xi$ in the Lie algebra $ \mathbb{R}  $ of the symmetry group $ \mathbb{R} ^{ + } $ such that
	\[X (z _e) = \xi _M (z _e),   \]
where $ \xi _M $ is the infinitesimal generator of the group action on $ M $ corresponding to $ \xi $. 
\end{definition}

Conformal relative equilibria also admit a natural interpretation in terms of reduction. Indeed, if the $\mathbb{R}^{+}$-action is free and proper, the quotient is a smooth manifold, but a conformally invariant vector field does not in general descend to a vector field on that quotient, since it is not invariant under the action. Under suitable assumptions, however, one may rescale the original vector field by a distinguished positive function so as to obtain an invariant vector field, which then defines a reduced dynamics on the quotient. From this viewpoint, conformal relative equilibria are in one-to-one correspondence with equilibria of the conformally reduced system; see Section~2 of \cite{rastelli2026conformal}.

The preceding definitions apply to arbitrary smooth manifolds. We now restrict our attention to the case where $M$ carries additional geometric structure, specifically an exact Poisson manifold endowed with a smooth action $\Phi: \mathbb{R}^{+} \times M \to M$ of the multiplicative group $\mathbb{R}^{+}$. 

Recall that a Poisson manifold $(M,\Pi)$ is called \textbf{exact} if there exists a vector field $D$ such that
\[
\mathcal{L}_D \Pi = -\Pi.
\]
The triple $(M,\Pi,D)$ is called an \textbf{exact Poisson manifold}, and the vector field $D$ is called the \textbf{Liouville vector field}. Furthermore, on any Poisson manifold, every smooth function determines a Hamiltonian vector field, which in turn governs the dynamics of a Hamiltonian system.

\begin{definition}[Hamiltonian vector field and Hamiltonian system]\label{def:hamiltonian-system}
Let $(M,\Pi)$ be a Poisson manifold, and let $H:M\to \mathbb{R}$ be a smooth function.
The Poisson tensor $\Pi$ induces a bundle map $\Pi^\sharp:T^*M\to TM$ defined by $\Pi ^ \sharp (\alpha)=\Pi(\alpha, \cdot)$.
The \textbf{Hamiltonian vector field} associated with $H$ is then defined by
\[
X_H=\Pi^\sharp(dH).
\]
A \textbf{Hamiltonian system} on $(M,\Pi)$ is a Poisson manifold together with a
Hamiltonian function $H$, and its trajectories are the integral curves of $X_H$,
that is, the solutions of
\[
\dot z = X_H(z).
\]
If $(M,\Pi,D)$ is an exact Poisson manifold, we will refer to $(M,\Pi,D,H)$ as a
\textbf{Hamiltonian system on the exact Poisson manifold} $(M,\Pi,D)$.
\end{definition}

To properly define a scaling symmetry for such a system, we must first introduce a compatibility condition between the $\mathbb{R}^+$-action and the Poisson structure itself.

\begin{definition}[Conformally invariant Poisson tensor]
Let $(M, \Pi, D)$ be an exact Poisson manifold, and let 
$\Phi : \mathbb{R}^+ \times M \to M$ denote a smooth action of the 
multiplicative group $\mathbb{R}^+$ on $M$. The Poisson tensor $\Pi$ 
is said to be \textbf{conformally invariant of degree} $c \in \mathbb{R}$ 
if, for every $s \in \mathbb{R}^+$, the pushforward under $\Phi_s$ satisfies
\[
    (\Phi_s)_* \Pi = s^{c} \Pi.
\]
In this case, the action $\Phi$ is called a \textbf{conformally Poisson action} 
of degree $c$.
\end{definition}

Having specified how the scaling action interacts with the Poisson tensor, we can now define what it means for the full Hamiltonian system to admit a scaling symmetry.

\begin{definition}[Scaling symmetry of a Hamiltonian system]\label{def:scaling-symmetry}
Let $(M,\Pi,D)$ be an exact Poisson manifold and let $H: M \to \mathbb{R}$ be a Hamiltonian function. Let
\(
\Phi: \mathbb{R}^{+} \times M \to M
\)
be a smooth action of the multiplicative group $\mathbb{R}^{+}$ on $M$.
We say that the Hamiltonian system $(M,\Pi,D,H)$ admits a \textbf{scaling symmetry} (with respect to $\Phi$) if the Poisson tensor and the Hamiltonian are both conformally invariant under this action, that is, 
    \[
    (\Phi_s)_* \Pi = s^c \Pi \quad \text{and} \quad   \Phi_s^* H = s^b H \quad \text{for all } s \in \mathbb{R}^{+}.
    \]
\end{definition}

A remarkable consequence of this setup is that the scaling condition on the Poisson tensor automatically guarantees exactness. The following proposition establishes that exact Poisson manifolds naturally emerge as the appropriate geometric framework for these symmetries, as the infinitesimal generator of the scaling action directly determines the Liouville vector field.

\begin{proposition}[Scaling symmetry and exactness]\label{prop:scaling-exactness}	 Let $(M, \Pi)$ be a Poisson manifold equipped with an $\mathbb{R}^+$-action $\Phi: \mathbb{R}^+ \times M \to M$. Suppose there exists a constant $c \in \mathbb{R} \setminus \{0\}$ such that the Poisson tensor $\Pi$ satisfies the scaling condition $(\Phi_s)_* \Pi = s^c \Pi$ for all $s \in \mathbb{R}^+$. Then the Poisson manifold is exact; that is, there exists a vector field $D \in \mathfrak{X}(M)$ such that
\begin{equation}
    \mathcal{L}_D \Pi = -\Pi.
\end{equation}
\end{proposition}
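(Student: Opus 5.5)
The plan is to take $D$ to be a constant multiple of the infinitesimal generator of the scaling action, and to obtain the Lie derivative of $\Pi$ by differentiating the scaling identity $(\Phi_s)_*\Pi = s^c\Pi$ in $s$ at $s=1$. Reparametrizing the multiplicative group as $s=e^t$ turns the $\mathbb{R}^+$-action into a smooth flow $\phi_t := \Phi_{e^t}$ on $M$. Let $Z$ denote the infinitesimal generator of the action associated with $1 \in \mathbb{R}$:
\[
Z(z) = \frac{d}{dt}\Big|_{t=0} \Phi_{e^t}(z) = \frac{d}{ds}\Big|_{s=1}\Phi_s(z).
\]
Then $\phi_t$ is exactly the flow of $Z$. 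It is globally defined, and the group property $\phi_{t+r}=\phi_t\circ\phi_r$ follows from $\Phi_{e^{t}}\circ\Phi_{e^{r}}=\Phi_{e^{t+r}}$.

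Next I use the standard formula for the Lie derivative of a tensor field along a flow. With the pushforward convention,
\[
\mathcal{L}_Z \Pi = \frac{d}{dt}\Big|_{t=0} (\phi_t)^*\Pi = -\frac{d}{dt}\Big|_{t=0} (\phi_t)_*\Pi,
\]
since $(\phi_t)^* = (\phi_{-t})_*$ on contravariant tensors. By hypothesis $(\phi_t)_*\Pi = e^{ct}\Pi$, and differentiating at $t=0$ gives $\mathcal{L}_Z\Pi = -c\,\Pi$. Because $c\neq 0$, I set
\[
D := \frac{1}{c}\, Z .
\]
Linearity of $\mathcal{L}$ in the vector field then gives $\mathcal{L}_D\Pi = \frac{1}{c}\mathcal{L}_Z\Pi = -\Pi$, which is the claim. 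This is where the assumption $c\neq0$ enters: for $c=0$ the action is Poisson and $Z$ is only a Poisson vector field.

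The only delicate point is getting the sign of the Lie derivative right under the pushforward convention (recall the Remark on conventions). One must check that $\frac{d}{dt}(\phi_t)_*\Pi|_{t=0}=-\mathcal{L}_Z\Pi$ and not $+\mathcal{L}_Z\Pi$. I would confirm this directly. For a point $z$ and covectors $\alpha,\beta$, we have $((\phi_t)_*\Pi)_z(\alpha,\beta) = \Pi_{\phi_{-t}(z)}(\phi_t^*\alpha,\phi_t^*\beta)$. Differentiating this expression yields the negative of the defining expression for $\mathcal{L}_Z\Pi$. Equivalently, one can test the identity on a simple example, such as the linear Poisson structure on $\mathfrak{g}^*$ with $\Phi_s(\mu)=s\mu$. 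There $\Pi$ is homogeneous of degree one, and the Euler vector field $E$ satisfies $\mathcal{L}_E\Pi=-\Pi$, which fixes the sign. Whatever sign convention is used, $D$ equals $\pm\frac1c Z$, so exactness holds in either case; the sign only affects the normalization of $D$.
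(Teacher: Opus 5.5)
Your proposal is correct and matches the paper's proof: both set $\varphi_t=\Phi_{e^t}$, use $\varphi_t^*\Pi=(\Phi_{e^{-t}})_*\Pi=e^{-ct}\Pi$ to get $\mathcal{L}_Z\Pi=-c\,\Pi$ for the generator $Z$, and take $D=Z/c$. Your additional remarks, that $\Phi_{e^t}$ is the global flow of $Z$ and the sign check with the Euler field on $\mathfrak{g}^*$, are correct and spell out points the paper leaves implicit.
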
 

\begin{proof}
Using the canonical identification $\mathrm{Lie}(\mathbb{R}^+)\cong \mathbb{R}$, 
the exponential map
\[
\exp:\mathbb{R}\to \mathbb{R}^+
\]
is the usual exponential, given by $\exp(t)=e^t$. Thus the one-parameter subgroup of $\mathbb{R}^+$ generated by $1\in \mathbb{R}$ is $t\mapsto e^t$, and it induces a one-parameter family of diffeomorphisms of $M$ given by
\[
\varphi_t:=\Phi_{e^t}.
\]

Let $Z$ denote the infinitesimal generator of this flow, namely
\[
Z(x)=\left.\frac{d}{dt}\right|_{t=0}\varphi_t(x)
=\left.\frac{d}{dt}\right|_{t=0}\Phi_{e^t}(x),
\qquad x\in M.
\]

By definition, the Lie derivative of the bivector field $\Pi$ along $Z$ is
\[
\mathcal{L}_Z\Pi=\left.\frac{d}{dt}\right|_{t=0}\varphi_t^*\Pi.
\]
Since $\Pi$ is a contravariant tensor field, its pullback is defined by
\[
\varphi_t^*\Pi=(\varphi_t^{-1})_*\Pi.
\]
Now, since $\varphi_t=\Phi_{e^t}$, we have
\[
\varphi_t^{-1}=\Phi_{e^{-t}},
\]
and therefore
\[
\varphi_t^*\Pi=(\Phi_{e^{-t}})_*\Pi.
\]

Using the scaling assumption $(\Phi_s)_*\Pi=s^c\Pi$ with $s=e^{-t}$, we obtain
\[
\varphi_t^*\Pi=(\Phi_{e^{-t}})_*\Pi=(e^{-t})^c\Pi=e^{-ct}\Pi.
\]
Differentiating at $t=0$ yields
\[
\mathcal{L}_Z\Pi
=\left.\frac{d}{dt}\right|_{t=0}e^{-ct}\Pi
=-c\,\Pi.
\]

Since $c\neq 0$, define
\[
D=\frac{1}{c}Z.
\]
Then, by linearity of the Lie derivative,
\[
\mathcal{L}_D\Pi
=\frac{1}{c}\mathcal{L}_Z\Pi
=\frac{1}{c}(-c\Pi)
=-\Pi.
\]
Hence $(M,\Pi)$ is an exact Poisson manifold.
\end{proof}

With the scaling behavior of both the Poisson tensor and the Hamiltonian established, we can now deduce the conformal invariance of the resulting Hamiltonian vector field.

\begin{proposition}[Conformal invariance of the Hamiltonian vector field]\label{prop:conformal-XH}
Let $(M,\Pi,D)$ be an exact Poisson manifold, and assume that
$(M,\Pi,D,H)$ admits a scaling symmetry (with respect to the action
$\Phi$), that is
\[
    \Phi_s^* H = s^b H, \qquad (\Phi_s)_* \Pi = s^c \Pi,
    \quad \text{for some } b,c \in \mathbb{R} \text{ and all } s \in \mathbb{R}^+.
\]
Then the Hamiltonian vector field $X_H = \Pi^\sharp(dH)$ is
conformally invariant of degree $a = c - b$ in the sense of
Definition~\ref{def:conformally-invariant-vector-field}, namely,
\[
    (\Phi_s)_* X_H = s^{\,c-b}\, X_H
    \quad \text{for all } s \in \mathbb{R}^+.
\]
\end{proposition}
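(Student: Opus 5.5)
The plan is to use the naturality of the construction $X_H=\Pi^\sharp(dH)$ under diffeomorphisms. For any diffeomorphism $\psi$ of $M$, any bivector $\Pi$ and any function $F$, I will first establish
\[
\psi_*\bigl(\Pi^\sharp(dF)\bigr)=(\psi_*\Pi)^\sharp\bigl(d((\psi^{-1})^*F)\bigr).
\]
Pointwise this identity is immediate. At $y=\psi(x)$ and for any covector $\beta\in T_y^*M$, I will compare two expressions:
\[
\langle \beta,\psi_*(\Pi^\sharp dF)(y)\rangle=\Pi_x(dF_x, T_x\psi^*\beta),
\]
\[
(\psi_*\Pi)_y(\alpha,\beta)=\Pi_x(T_x\psi^*\alpha,T_x\psi^*\beta).
\]
Choosing $\alpha=d((\psi^{-1})^*F)_y$ gives $T_x\psi^*\alpha=dF_x$, since $(\psi^{-1})^*F=F\circ\psi^{-1}$. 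This yields the identity.

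Next I will apply this with $\psi=\Phi_s$ and $F=H$. The key point is converting the hypothesis $\Phi_s^*H=s^bH$ into information about $(\Phi_s^{-1})^*H$. Since $\Phi$ is a group action, $\Phi_s^{-1}=\Phi_{s^{-1}}$, so
\[
(\Phi_s^{-1})^*H=\Phi_{s^{-1}}^*H=s^{-b}H.
\]
Substituting this together with $(\Phi_s)_*\Pi=s^c\Pi$ gives
\[
(\Phi_s)_*X_H=(s^c\Pi)^\sharp\bigl(d(s^{-b}H)\bigr)=s^{c-b}\,\Pi^\sharp(dH)=s^{c-b}X_H.
\]
The last step uses only that $\sharp$ is linear in $\Pi$ and that $d$ is linear, with $s$ constant.

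The only real subtlety is keeping the push-forward and pull-back conventions straight, particularly the sign of the exponent on $H$: one must use $(\Phi_s^{-1})^*H$, not $\Phi_s^*H$. Getting this wrong would produce $c+b$ instead of $c-b$. I will make sure the pointwise identity is written with $\psi^{-1}$ explicitly and check it against the definitions of push-forward of a vector field, $(\psi_*X)(y)=T\psi\, X(\psi^{-1}(y))$, and of a bivector. I expect no other obstacle. Note that the Liouville vector field $D$ plays no role in the argument.
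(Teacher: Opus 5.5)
Your proof is correct and follows essentially the same route as the paper. Both arguments rest on the naturality identity $((\Phi_s)_*\Pi)^\sharp=(\Phi_s)_*\circ\Pi^\sharp\circ\Phi_s^*$ (the paper's appendix lemma) applied to $dH$; the only difference is that you feed in $d((\Phi_s^{-1})^*H)=s^{-b}\,dH$ via the group property, whereas the paper pulls back $dH$ to get $s^b\,dH$ and divides by $s^b$, so it does not need $\Phi_s^{-1}=\Phi_{s^{-1}}$.
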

\begin{proof}
Let $s\in \mathbb{R}^+$. By the naturality of the Poisson bundle map (Lemma~\ref{lemm:naturality-poisson}), we have the following global equality of bundle maps:
\[
\bigl((\Phi_s)_*\Pi\bigr)^\sharp = (\Phi_s)_* \circ \Pi^\sharp \circ \Phi_s^*.
\]
We evaluate both sides of this equation on the globally defined $1$-form $dH$. 
(Pointwise, at a point $\Phi_s(x)$, this means evaluating the bundle map on the covector $dH_{\Phi_s(x)} \in T_{\Phi_s(x)}^*M$.)

Starting with the right-hand side, the pullback commutes with the exterior derivative. Using the assumed conformal invariance of the Hamiltonian, $\Phi_s^*H = s^bH$, we obtain:
\[
\Phi_s^*(dH) = d(\Phi_s^*H) = d(s^bH) = s^b dH.
\]
Applying the remaining bundle maps $\Pi^\sharp$ and $(\Phi_s)_*$ yields:
\begin{align*}
\bigl((\Phi_s)_* \circ \Pi^\sharp \circ \Phi_s^*\bigr)(dH) 
&= (\Phi_s)_* \Bigl( \Pi^\sharp \bigl( s^b dH \bigr) \Bigr) \\
&= (\Phi_s)_* \Bigl( s^b\, \Pi^\sharp \bigl( dH \bigr) \Bigr) \\
&= (\Phi_s)_* \bigl( s^b X_H \bigr) \\
&= s^b (\Phi_s)_* X_H.
\end{align*}

Next, we evaluate the left-hand side on $dH$. Using the assumed scaling symmetry of the Poisson tensor, $(\Phi_s)_*\Pi = s^c\Pi$, we get:
\[
\bigl((\Phi_s)_*\Pi\bigr)^\sharp (dH) = (s^c\Pi)^\sharp(dH) = s^c \Pi^\sharp(dH) = s^c X_H.
\]

Since the left- and right-hand sides are equal, we have:
\[
s^c X_H = s^b (\Phi_s)_* X_H.
\]
Multiplying both sides by $s^{-b}$ yields:
\[
(\Phi_s)_* X_H = s^{c-b} X_H.
\]
Thus, the Hamiltonian vector field $X_H$ is conformally invariant of degree $a=c-b$.
\end{proof}

To ground these abstract definitions in a concrete physical setting, our first example is the free rigid body—one of the most classical systems in Hamiltonian mechanics. Its equations of motion are naturally formulated on the dual of the Lie algebra $\mathfrak{so}(3)^*$, which carries a Lie--Poisson structure (recalled in Section~5). The quadratic Hamiltonian encoding the kinetic energy is homogeneous of degree two with respect to the scaling action, making it an ideal test case for the theory developed above. As we shall see, the geometry of $\mathfrak{so}(3)^*$ forces every conformal relative equilibrium to be an ordinary equilibrium, a fact that will also be recovered from the general Lie-algebraic criterion of Proposition~\ref{prop:LP-criterion}.

\begin{figure}[ht]
\centering
\includegraphics[width=0.6\textwidth]{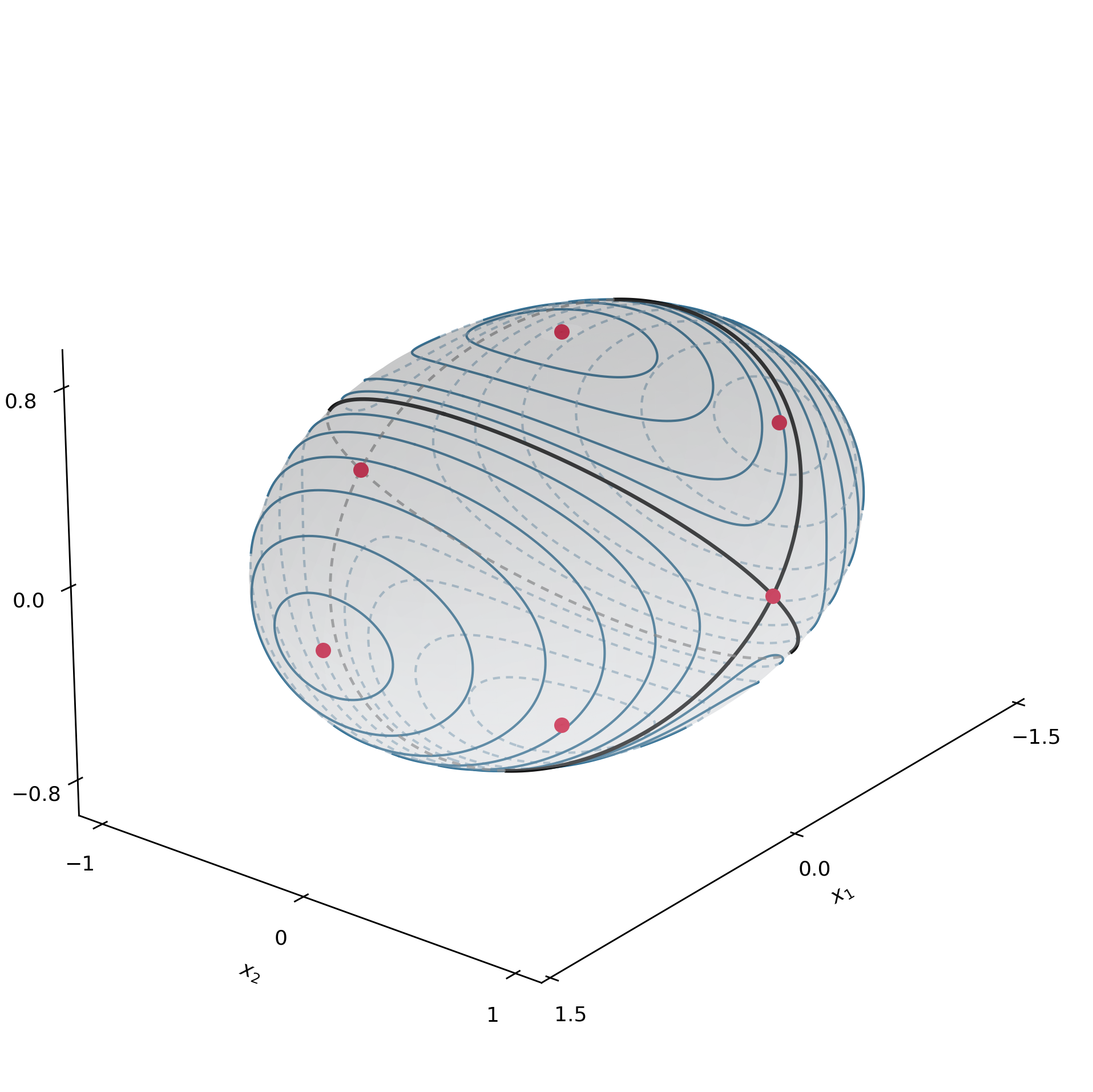}
\caption{Free rigid body on $\mathfrak{so}(3)^*$ for
\(
H(x)=\tfrac12(x_1^2+2x_2^2+3x_3^2).
\)
The translucent surface is the energy ellipsoid
\(H=1\).
The blue curves are the intersections \(H=1\cap\{C=c\}\), where
\(C(x)=x_1^2+x_2^2+x_3^2\), with hidden back-side portions shown as lighter
dashed curves of the same thickness. The darker highlighted curves correspond
to the special level \(C=1\), which passes through the saddle equilibria
\((0,\pm1,0)\), and the red points are the equilibria on the energy level
\(H=1\), which in this example coincide with the conformal relative
equilibria.}
\label{fig:so3}
\end{figure}
\begin{example}[Free rigid body on a Lie--Poisson manifold]\label{ex:rigid-body}
Consider the exact Poisson manifold $(M,\Pi,D)$, where $M=\mathbb R^3$,
\[
\Pi(x)=
\begin{pmatrix}
0 & x_3 & -x_2\\
-x_3 & 0 & x_1\\
x_2 & -x_1 & 0
\end{pmatrix},
\qquad
D=x_i\partial_{x_i}.
\]
Here $\Pi$ is the Lie--Poisson tensor on $\mathfrak{so}(3)^*\cong\mathbb R^3$, and
$D$ is the Liouville vector field associated with the scaling action
\(\Phi_s(x)=s\,x,\, s>0\).
Since $\Pi$ is linear in $x$, we have
\((\Phi_s)_*\Pi=s\,\Pi,\)
or equivalently
\(L_D\Pi=-\Pi,\)
which confirms that  $(M,\Pi,D)$ is an exact Poisson manifold, and shows that the scaling action is conformally Poisson of degree $c=1$.

 For $\xi\in \mathrm{Lie}(\mathbb R^+)\cong\mathbb R$, the corresponding infinitesimal generator on $M$ is
\[
\xi_M(x)=\left.\frac{d}{dt}\right|_{t=0}\Phi_{\exp(t\xi)}(x)
=\left.\frac{d}{dt}\right|_{t=0}e^{t\xi}x
=\xi x
=\xi D(x).
\]
Hence, in this case,  $\xi_M=\xi D$.

Now let
\[
H(x)=\tfrac12(\alpha x_1^2+\beta x_2^2+\gamma x_3^2),
\]
where $\alpha,\beta,\gamma>0$ are pairwise distinct. Since $H$ is homogeneous of degree $2$, it is conformally invariant of degree $b=2$ with respect to the scaling action. Therefore, by Proposition~\ref{prop:conformal-XH}, the Hamiltonian vector field
\(X_H=\Pi^\sharp(dH)\)
is conformally invariant of degree
\[
a=c-b=1-2=-1.
\]

The Hamiltonian vector field is $X_H=\Pi^\sharp(dH)=\Pi\nabla H$. Since
$\nabla H(x)=(\alpha x_1,\beta x_2,\gamma x_3)$, we compute
\[
X_H(x)=
\bigl((\beta-\gamma)x_2x_3,\;(\gamma-\alpha)x_3x_1,\;(\alpha-\beta)x_1x_2\bigr).
\]
Thus
\[
\begin{aligned}
\dot x_1 &= (\beta-\gamma)x_2x_3,\\
\dot x_2 &= (\gamma-\alpha)x_3x_1,\\
\dot x_3 &= (\alpha-\beta)x_1x_2.
\end{aligned}
\]
A conformal relative equilibrium is a point $x_e\neq 0$ such that
\[
X_H(x_e)=\xi_M(x_e)
\]
for some $\xi\in \mathbb{R}  $. Since $\xi_M=\xi D$, this condition becomes
\[
X_H(x_e)=\xi D(x_e)=\xi x_e.
\]

On the other hand,
\[
X_H(x)=x\times \nabla H(x),
\]
so $X_H(x)$ is orthogonal to $x$ for every $x\in\mathbb R^3$. Taking the Euclidean inner product of the conformal relative equilibrium equation with $x_e$ gives
\[
X_H(x_e)\cdot x_e=\xi\,|x_e|^2.
\]
But also
\[
X_H(x_e)\cdot x_e=(x_e\times \nabla H(x_e))\cdot x_e=0.
\]
Hence
\[
\xi\,|x_e|^2=0.
\]
Since $x_e\neq 0$, we conclude that $\xi=0$.

Therefore every conformal relative equilibrium is in fact an ordinary equilibrium. The equilibrium condition $X_H(x_e)=0$ is equivalent to
\[
(\beta-\gamma)x_2x_3=0,\qquad
(\gamma-\alpha)x_3x_1=0,\qquad
(\alpha-\beta)x_1x_2=0.
\]
Since $\alpha,\beta,\gamma$ are pairwise distinct, all three coefficients are nonzero, so at least two components of $x_e$ must vanish. Thus the nonzero conformal relative equilibria are precisely
\[
(\lambda,0,0),\qquad (0,\lambda,0),\qquad (0,0,\lambda),
\qquad \lambda\neq 0,
\]
corresponding to steady rotations about the three principal axes of the rigid body. See Figure~\ref{fig:so3}.
\end{example}

\section{Conformal Hamiltonian Systems and Momentum Maps}

In this section we introduce the conformally Hamiltonian framework on exact Poisson manifolds endowed with conformal symmetries. The Liouville vector field $D$ allows one to modify the usual Hamiltonian vector field $\Pi^\sharp(dF)$ by adding a multiple of $D$, leading to the notion of a conformally Hamiltonian system. We then introduce conformal momentum maps for conformally Poisson actions of $\mathbb R^+$. While in Example~\ref{ex:rigid-body} the conformal momentum map associated with the scaling action is a Casimir, this is not true in general on degenerate Poisson manifolds. We also  identify a sufficient condition under which the Casimir property is recovered. Finally, we show that, in the nondegenerate case, the whole framework reduces to the exact symplectic one of \cite{rastelli2026conformal}.

\subsection{The Exact Poisson Manifold Case}
We now turn to the exact Poisson setting, where we define conformally Hamiltonian vector fields and the associated conformal momentum maps for conformally Poisson $\mathbb R^+$-actions.

\begin{definition}[{\bf Conformally Hamiltonian systems on an exact Poisson manifold}]
Let $(M,\Pi,D)$ be an exact Poisson manifold and let $F\in C^\infty(M)$.
A vector field  $X_F^\kappa$ is called a {\bf conformally Hamiltonian vector field} with parameter $ \kappa \in \mathbb{R}  $ if 
\[
X_F^\kappa=\Pi^\sharp(dF)+\kappa D .
\]

The function $F$ is called the {\bf conformal Hamiltonian}, and the
quadruple $(M,\Pi,D,X_F^\kappa)$ is called a {\bf conformally Hamiltonian
system} with parameter $\kappa$.

In the special case $\kappa=0$, the vector field reduces to the usual
Hamiltonian vector field
\[
X_F^0=\Pi^\sharp(dF),
\]
and $(M,\Pi,D,X_F^0)$ reduces to a Hamiltonian system on the Poisson
manifold $(M,\Pi)$.
\end{definition}

\begin{definition}[{\bf Conformal momentum map on an exact Poisson manifold}]\label{def:conformal-momentum-map}
Let $(M,\Pi,D)$ be an exact Poisson manifold and
\(
\Phi:\mathbb{R}^+\times M\to M
\)
an action of the multiplicative group $\mathbb{R}^+$ on $M$.
Suppose that $\Phi$ is a conformally Poisson action of degree
$c\in\mathbb{R}$.
A map $J:M\to\mathbb{R}$ is called a {\bf conformal momentum map of
degree $c$} for the action if for every $\xi\in\mathbb{R}$
\[
\xi_M=\Pi^\sharp(dJ_\xi)+(c\xi)D,
\]
where $J_\xi=\xi J$ and $\xi_M$ denotes the infinitesimal generator
corresponding to $\xi$.

Equivalently,
\[
X_{J_\xi}^{\,c\xi}=\xi_M ,
\]
that is, the infinitesimal generator $\xi_M$ is a conformally
Hamiltonian vector field with parameter $\kappa=c\xi$ and conformal
Hamiltonian $J_\xi$.
\end{definition}
The defining equation for the conformal momentum map simplifies significantly when the infinitesimal generator of the action is a scalar multiple of the Liouville vector field. Under this condition, the Hamiltonian vector field of the conformal momentum map vanishes identically, guaranteeing that the momentum map is a Casimir function.

\begin{proposition}[Sufficient condition for Casimir momentum maps]\label{prop:casimir-momentum}
Let $(M, \Pi, D)$ be an exact Poisson manifold equipped with a conformally Poisson action of degree $c \in \mathbb{R}$. Suppose that for every $\xi \in \mathbb{R}$, the corresponding infinitesimal generator of the action is $\xi_M = c \xi D$. Then any conformal momentum map $J$ of degree $c$ for this action is necessarily a Casimir function.
\end{proposition}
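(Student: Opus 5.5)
The plan is to substitute the hypothesis $\xi_M = c\xi D$ directly into the defining identity of a conformal momentum map (Definition~\ref{def:conformal-momentum-map}). For every $\xi\in\mathbb{R}$ that identity reads
\[
\xi_M=\Pi^\sharp(dJ_\xi)+(c\xi)D, \qquad J_\xi=\xi J,
\]
and replacing the left-hand side by $c\xi D$ gives
\[
c\xi D=\Pi^\sharp(d(\xi J))+c\xi D .
\]
The $D$-terms cancel, leaving $\Pi^\sharp(d(\xi J))=0$ for all $\xi\in\mathbb{R}$.

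Next, I will use linearity of $d$ and of the bundle map $\Pi^\sharp$ to write $\Pi^\sharp(d(\xi J))=\xi\,\Pi^\sharp(dJ)$. Choosing $\xi=1$ (any nonzero $\xi$ would do) then yields $X_J=\Pi^\sharp(dJ)=0$ identically on $M$.

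Finally, I will translate this into the Casimir property. For any $G\in C^\infty(M)$,
\[
\{G,J\}=\Pi(dG,dJ)=\pm\, dG\bigl(\Pi^\sharp(dJ)\bigr)=0,
\]
where the sign depends only on the convention $\Pi^\sharp(\alpha)=\Pi(\alpha,\cdot)$ and skew-symmetry. Hence $J$ Poisson-commutes with every function, i.e.\ $J$ is a Casimir.

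There is no real obstacle here. The only point requiring care is to note that the argument needs just one nonzero $\xi$, so it works even in the degenerate case $c=0$: there the hypothesis gives $\xi_M=0$, and the conclusion still follows.
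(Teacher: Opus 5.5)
Your proposal is correct and follows the paper's argument: substitute $\xi_M = c\xi D$ into the defining identity of Definition~\ref{def:conformal-momentum-map}, cancel the $D$-terms, and use a nonzero $\xi$ to conclude $\Pi^\sharp(dJ)=0$. Your added step showing $\{G,J\}=0$ for all $G$ only makes explicit what the paper states as $dJ$ lying in the kernel of $\Pi$.
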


\begin{proof}
By Definition \ref{def:conformal-momentum-map}, a conformal momentum map $J$ of degree $c$ corresponding to the Lie algebra element $\xi$ must satisfy the condition:
\[
\xi_M = \Pi^\sharp(dJ_\xi) + c \xi D,
\]
where $J_\xi = \xi J$. Substituting the hypothesis $\xi_M = c \xi D$ into this defining equation yields:
\[
c \xi D = \xi \Pi^\sharp(dJ) + c \xi D.
\]
Subtracting $c \xi D$ from both sides gives $\xi \Pi^\sharp(dJ) = 0$. Since this holds for all $\xi \in \mathbb{R}$, it follows that $\Pi^\sharp(dJ) = 0$. Therefore, $dJ$ lies entirely in the kernel of the Poisson tensor $\Pi$, meaning that $J$ is a Casimir function on $M$.
\end{proof}
The condition in Proposition~\ref{prop:casimir-momentum} is sufficient, but conformal momentum maps on degenerate Poisson manifolds need not be Casimirs in general. The following example illustrates how a shifted scaling action can produce a dynamic, non-Casimir momentum map.

\begin{example}[A non-Casimir conformal momentum map]\label{ex:non-casimir-momentum}
Consider $M = \mathbb{R}^3$ with coordinates $x = (x_1, x_2, x_3)^T$ and the degenerate Poisson tensor represented by the matrix:
\[
\Pi(x) = \begin{pmatrix} 
0 & 1 & 0 \\ 
-1 & 0 & 0 \\ 
0 & 0 & 0 
\end{pmatrix}.
\]
The coordinate $x_3$ is a global Casimir since $\Pi(x) \nabla x_3 = 0$. One can verify that $D(x) = (x_1, 0, x_3)^T$ serves as a Liouville vector field, making $(M, \Pi, D)$ an exact Poisson manifold. 

Now, consider the smooth action of $\mathbb{R}^+$ on $M$ given by:
\[
\Phi_s(x) = \bigl(s(x_1+1)-1,\; x_2,\; s x_3\bigr)^T.
\]
The Jacobian matrix of this transformation is the diagonal matrix $\mathcal{J}_s = \operatorname{diag}(s, 1, s)$. Matrix multiplication confirms the pushforward condition $\Phi_{s*} \Pi = \mathcal{J}_s \Pi \mathcal{J}_s^T = s \Pi$, demonstrating that the action is conformally Poisson of degree $c=1$.

For a Lie algebra element $\xi \in \mathbb{R}$, the corresponding infinitesimal generator of the action is:
\[
\xi_M(x) = \frac{d}{dt}\bigg|_{t=0} \Phi_{e^{t\xi}}(x) = \xi \begin{pmatrix} x_1+1 \\ 0 \\ x_3 \end{pmatrix} = \xi \begin{pmatrix} 1 \\ 0 \\ 0 \end{pmatrix} + \xi D(x).
\]
By Definition~\ref{def:conformal-momentum-map}, a conformal momentum map $J$ of degree $c=1$ must satisfy $\xi_M = \Pi^\sharp(dJ_\xi) + \xi D$, where $J_\xi = \xi J$. Substituting our expression for $\xi_M$ and writing $\Pi^\sharp(dJ)$ as the matrix-vector product $\Pi \nabla J$ yields:
\[
\xi \begin{pmatrix} 1 \\ 0 \\ 0 \end{pmatrix} + \xi D(x) = \xi \Pi \nabla J + \xi D(x) \implies \Pi \nabla J = \begin{pmatrix} 1 \\ 0 \\ 0 \end{pmatrix}.
\]
Expanding the matrix multiplication on the left gives:
\[
\begin{pmatrix} 0 & 1 & 0 \\ -1 & 0 & 0 \\ 0 & 0 & 0 \end{pmatrix} \begin{pmatrix} \partial_{x_1} J \\ \partial_{x_2} J \\ \partial_{x_3} J \end{pmatrix} = \begin{pmatrix} \partial_{x_2} J \\ -\partial_{x_1} J \\ 0 \end{pmatrix}.
\]
Equating this to $(1, 0, 0)^T$ requires $\partial_{x_2}J = 1$ and $\partial_{x_1}J = 0$. Therefore, the conformal momentum map is $J(x) = x_2$.

Crucially, $J$ is \emph{not} a Casimir function. Indeed,
\[
\Pi^\sharp(dJ)=\Pi\nabla J=\begin{pmatrix}1\\0\\0\end{pmatrix}\neq 0,
\]
so $dJ$ does not lie in the kernel of the Poisson tensor. Thus this example shows that, on a degenerate Poisson manifold, a conformal momentum map need not be a Casimir.
\end{example}

\subsection{Relationship with the Exact Symplectic Manifold Case}

In this subsection, we relate the Poisson framework developed above to the exact symplectic setting studied in \cite{rastelli2026conformal}. In that work, conformal actions and momentum maps were formulated in terms of an exact symplectic structure $(M,\omega,\theta)$. Our goal here is to show that, in the nondegenerate case, the Poisson notions introduced in this paper naturally recover those symplectic constructions. 

More precisely, we show that an exact Poisson manifold with nondegenerate Poisson tensor canonically induces an exact symplectic structure, and that both conformal actions and conformal momentum maps descend to their symplectic counterparts. We begin by recalling the characterization of exact symplectic manifolds in terms of Liouville vector fields.

\begin{lemma}[Characterization of Exact Symplectic Manifolds]\label{lem:exact-symplectic-manifold}
Let $(M, \omega)$ be a symplectic manifold. The following are equivalent:
\begin{enumerate}
    \item $(M, \omega)$ is an exact symplectic manifold; that is, there exists a $1$-form $\theta \in \Omega^1(M)$ such that $\omega = -d\theta$.
    \item There exists a vector field $D \in \mathfrak{X}(M)$, called the Liouville vector field, such that $\mathcal{L}_D \omega = \omega$.
\end{enumerate}
\end{lemma}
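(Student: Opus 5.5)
We prove the two implications separately. The tool throughout is Cartan's magic formula, $\mathcal{L}_D\omega = d\,\iota_D\omega + \iota_D\, d\omega$, combined with $d\omega=0$, which gives
\[
\mathcal{L}_D\omega = d(\iota_D\omega).
\]
Both directions come from matching the primitive $\theta$ with the contraction $\iota_D\omega$, up to sign.

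\emph{(1) $\Rightarrow$ (2).} Assume $\omega=-d\theta$. Since $\omega$ is nondegenerate, the map $\omega^\flat:TM\to T^*M$, $v\mapsto \iota_v\omega$, is a vector bundle isomorphism. We may therefore define a smooth vector field $D$ by the requirement
\[
\iota_D\omega=-\theta.
\]
Then $\mathcal{L}_D\omega = d(\iota_D\omega) = -d\theta = \omega$, which is statement (2).

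\emph{(2) $\Rightarrow$ (1).} Assume $\mathcal{L}_D\omega=\omega$, and set
\[
\theta := -\iota_D\omega,
\]
which is a smooth $1$-form. Then $-d\theta = d(\iota_D\omega) = \mathcal{L}_D\omega = \omega$, so $\omega$ is exact with primitive $\theta$.

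The only points needing care are the sign conventions, which we fix by $\omega=-d\theta$ and $\iota_D\omega=-\theta$ as above. Smoothness of $D$ in the first direction follows from the smoothness of the inverse bundle map $(\omega^\flat)^{-1}$. Nondegeneracy is used only in the first direction, to solve for $D$; the second direction uses nothing beyond closedness of $\omega$. We also note that the two constructions are mutually inverse, so $\theta$ and $D$ determine each other. This gives the dictionary between Liouville forms and Liouville vector fields that will be compared with the Poisson condition $\mathcal{L}_D\Pi=-\Pi$ afterwards.

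There is no real obstacle here. The step most likely to go wrong is a sign slip when combining Cartan's formula with the convention $\omega=-d\theta$, so we check it on $T^*Q$ with $\theta=p\,dq$ and $\omega=dq\wedge dp$. Then $D=p\,\partial_p$ satisfies $\iota_D\omega=-p\,dq=-\theta$, consistent with the choice above.
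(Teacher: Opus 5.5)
Your proof is correct and follows the paper's argument essentially verbatim. For (1)$\Rightarrow$(2) you use Cartan's formula with $d\omega=0$ to reduce $\mathcal{L}_D\omega$ to $d(i_D\omega)$ and use nondegeneracy to define $D$ by $i_D\omega=-\theta$; for (2)$\Rightarrow$(1) you set $\theta := -i_D\omega$.
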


\begin{proof}
$(1 \Rightarrow 2)$: Suppose $\omega = -d\theta$. Since the symplectic form $\omega$ is non-degenerate, the bundle map $\omega^\flat: TM \to T^*M$ defined by $X \mapsto i_X \omega$ is an isomorphism. Consequently, for the $1$-form $-\theta$, there exists a unique vector field $D$ such that $\omega^\flat(D) = i_D \omega = -\theta$. 

Using Cartan's magic formula and the fact that $d\omega = 0$ for any symplectic form, we have:
\[
\mathcal{L}_D \omega = d(i_D \omega) + i_D(d\omega) = d(-\theta) + 0 = -d\theta.
\]
By our initial assumption $\omega = -d\theta$, it follows that $\mathcal{L}_D \omega = \omega$.

$(2 \Rightarrow 1)$: Conversely, suppose there exists a vector field $D$ such that $\mathcal{L}_D \omega = \omega$. Applying Cartan's formula again:
\[
\omega = \mathcal{L}_D \omega = d(i_D \omega) + i_D(d\omega).
\]
Since $d\omega = 0$, this simplifies to $\omega = d(i_D \omega)$. By defining the Liouville $1$-form as $\theta \coloneqq -i_D \omega$, we obtain $\omega = d(-\theta) = -d\theta$, which confirms that the symplectic form is exact.
\end{proof}

We now show that exact Poisson structures with nondegenerate tensor naturally induce exact symplectic structures.


\begin{lemma}[Nondegenerate Exact Poisson Manifolds Are Exact Symplectic]
Let $(M,\Pi,D)$ be an exact Poisson manifold, and assume that $\Pi$ is nondegenerate. Let $\omega$ be the symplectic form associated with $\Pi$. Then $(M,\omega)$ is an exact symplectic manifold.
\end{lemma}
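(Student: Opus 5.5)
The strategy is to reduce the statement to Lemma~\ref{lem:exact-symplectic-manifold}: we show that the same vector field $D$ satisfies $\mathcal{L}_D\omega=\omega$, and then the Lemma produces the Liouville form $\theta=-i_D\omega$ with $\omega=-d\theta$. The first step is to fix conventions. Since $\Pi$ is nondegenerate, $\Pi^\sharp:T^*M\to TM$ is a vector bundle isomorphism. The associated symplectic form $\omega$ is defined by requiring $\omega^\flat=-(\Pi^\sharp)^{-1}$ (or $+$, depending on the sign convention; the sign plays no role below). Equivalently, $i_{X_H}\omega = \pm dH$. The Jacobi identity for $\Pi$ is equivalent to $d\omega=0$, so $\omega$ is symplectic. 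We take this as the standard correspondence.

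The key step is to transfer the scaling identity $\mathcal{L}_D\Pi=-\Pi$ to $\omega$. Contracting indices, the relation $\Pi^{ij}\omega_{jk}=\pm\delta^i_k$ expresses that $\Pi$ and $\omega$ are inverse tensors. The identity tensor $\delta^i_k$ is invariant under every flow, so its Lie derivative vanishes. Applying $\mathcal{L}_D$ with the Leibniz rule, which commutes with contractions, gives
\[
(\mathcal{L}_D\Pi)^{ij}\omega_{jk}+\Pi^{ij}(\mathcal{L}_D\omega)_{jk}=0 .
\]
Substituting $\mathcal{L}_D\Pi=-\Pi$ yields $\Pi^{ij}(\mathcal{L}_D\omega)_{jk}=\Pi^{ij}\omega_{jk}$. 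Since $\Pi$ is invertible, $\mathcal{L}_D\omega=\omega$.

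Alternatively, one can argue with flows. If $\varphi_t$ is the (local) flow of $D$, then $\varphi_t^*\Pi=e^{-t}\Pi$, because both sides solve the same linear ODE in $t$ with the same initial value. The inversion relation is natural under diffeomorphisms, so it gives $\varphi_t^*\omega=e^{t}\omega$, and differentiating at $t=0$ gives $\mathcal{L}_D\omega=\omega$. I would present the index-free version: $\omega^\flat\circ\Pi^\sharp=\pm\mathrm{id}$, and apply $\mathcal{L}_D$ as a derivation on these bundle maps.

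The final step invokes Lemma~\ref{lem:exact-symplectic-manifold}, implication $(2\Rightarrow1)$, with this $D$. It gives $\theta:=-i_D\omega$ with $\omega=-d\theta$, so $(M,\omega)$ is exact symplectic. The main obstacle is the justification that $\mathcal{L}_D$ commutes with the inversion $\Pi\mapsto\omega$ with the correct sign, since a sign slip would give $\mathcal{L}_D\omega=-\omega$. The Leibniz computation above settles this: the minus sign in $\mathcal{L}_D\Pi=-\Pi$ is absorbed when the derivative is moved onto $\omega$. As a sanity check, take $\mathbb{R}^2$ with $\omega=dq\wedge dp$ and $D=\tfrac12(q\partial_q+p\partial_p)$. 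Then $\mathcal{L}_D\omega=\omega$ and $\Pi=\partial_q\wedge\partial_p$ satisfies $\mathcal{L}_D\Pi=-\Pi$, which is consistent.
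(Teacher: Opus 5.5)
Your proposal is correct and follows the paper's proof essentially step for step. The paper also applies $\mathcal{L}_D$ as a derivation to the inverse relation $\omega^\flat\circ\Pi^\sharp=\mathrm{Id}_{T^*M}$, substitutes $\mathcal{L}_D\Pi=-\Pi$, and cancels the invertible factor to get $\mathcal{L}_D\omega=\omega$, then invokes $(2\Rightarrow 1)$ of Lemma~\ref{lem:exact-symplectic-manifold}; your index computation (using $\mathcal{L}_D\delta=0$ and that Lie derivatives commute with contractions) and your flow argument are equivalent presentations of that same step.
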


\begin{proof}
Since $\Pi$ is nondegenerate, the bundle maps
\[
\Pi^\sharp:T^*M\to TM, \qquad \Pi^\sharp(\alpha)=i_\alpha\Pi,
\qquad
\omega^\flat:TM\to T^*M, \qquad \omega^\flat(X)=i_X\omega
\]
are mutually inverse. Thus
\(
\omega^\flat\circ \Pi^\sharp=\mathrm{Id}_{T^*M}\), \(
\Pi^\sharp\circ \omega^\flat=\mathrm{Id}_{TM}.
\)
Taking the Lie derivative along $D$ of the first identity gives
\[
(\mathcal L_D\omega^\flat)\circ \Pi^\sharp
+
\omega^\flat\circ(\mathcal L_D\Pi^\sharp)=0.
\]
Using
\(
\mathcal L_D\omega^\flat=(\mathcal L_D\omega)^\flat\), 
\(\mathcal L_D\Pi^\sharp=(\mathcal L_D\Pi)^\sharp,
\)
the previous identity becomes
\[
(\mathcal L_D\omega)^\flat\circ \Pi^\sharp
=
-\omega^\flat\circ(\mathcal L_D\Pi)^\sharp.
\]
Since $\mathcal L_D\Pi=-\Pi$, it follows that
\(
(\mathcal L_D\Pi)^\sharp=-\Pi^\sharp,
\)
and therefore
\[
(\mathcal L_D\omega)^\flat\circ \Pi^\sharp
=
\omega^\flat\circ\Pi^\sharp.
\]
Composing on the right with $\omega^\flat=(\Pi^\sharp)^{-1}$ yields
\[
(\mathcal L_D\omega)^\flat=\,\omega^\flat\circ\Pi^\sharp\circ\omega^\flat=\omega^\flat.
\]
Hence $\mathcal L_D\omega=\omega$. By Lemma~\ref{lem:exact-symplectic-manifold}, $(M,\omega)$ is an exact symplectic manifold.
\end{proof}


Having established the correspondence between exact Poisson and exact symplectic structures, we now  show that, in the nondegenerate case, the notion of a conformally Poisson action coincides with that of a conformally symplectic action introduced in \cite{rastelli2026conformal}.
\begin{proposition}[Equivalence of Conformally Poisson and Conformally Symplectic Actions]
Let $(M,\Pi)$ be a nondegenerate Poisson manifold, and let $\omega$ be the associated symplectic form. Let $\Phi:\mathbb{R}^+\times M\to M$ be an action  of the multiplicative group $ \mathbb{R}  ^{ + } $. Then $\Phi$ is conformally Poisson of degree $c\in\mathbb{R}$, that is,
\[
(\Phi_s)_*\Pi=s^c\Pi \qquad \text{for all } s\in\mathbb{R}^+,
\]
if and only if $\Phi$ is conformally symplectic of degree $c$, that is, 
\[
\Phi_s^*\omega=s^c\omega \qquad \text{for all } s\in\mathbb{R}^+.
\]
\end{proposition}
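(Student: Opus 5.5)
The plan is to reduce the statement to a pointwise linear-algebra fact relating a bivector and its inverse 2-form. I will use the naturality of the Poisson bundle map (Lemma~\ref{lemm:naturality-poisson}) already invoked in the proof of Proposition~\ref{prop:conformal-XH}, namely
\[
\bigl((\Phi_s)_*\Pi\bigr)^\sharp=(\Phi_s)_*\circ\Pi^\sharp\circ\Phi_s^*,
\]
together with its symplectic analogue $(\Phi_s^*\omega)^\flat=\Phi_s^*\circ\omega^\flat\circ(\Phi_s)_*$. Both identities are immediate from the definitions of pushforward and pullback of tensors. Here $(\Phi_s)_*$ acts on tangent vectors and $\Phi_s^*$ acts on covectors; since $\Phi_s$ is a diffeomorphism with inverse $\Phi_{s^{-1}}$, these maps are invertible and mutually transposed.

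\textbf{Forward direction.} Assume $(\Phi_s)_*\Pi=s^c\Pi$. The identity then reads
\[
s^c\Pi^\sharp=(\Phi_s)_*\circ\Pi^\sharp\circ\Phi_s^*.
\]
Because $\Pi$ is nondegenerate, I invert both sides and use $\omega^\flat=(\Pi^\sharp)^{-1}$, as in the preceding lemma. This gives
\[
s^{-c}\omega^\flat=(\Phi_s^*)^{-1}\circ\omega^\flat\circ(\Phi_s)_*^{-1}.
\]
Rearranging yields $\Phi_s^*\circ\omega^\flat\circ(\Phi_s)_*=s^{c}\omega^\flat$. By the symplectic naturality identity, the left side is $(\Phi_s^*\omega)^\flat$. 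Since $\flat$ is injective, this gives $\Phi_s^*\omega=s^c\omega$.

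\textbf{Reverse direction.} The converse runs the same chain backwards: start from $(\Phi_s^*\omega)^\flat=s^c\omega^\flat$, invert, and recognize $\bigl((\Phi_s)_*\Pi\bigr)^\sharp=s^c\Pi^\sharp$.

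The main obstacle is bookkeeping rather than substance. I must check that the base points match: $\Pi^\sharp$ at $x$ and at $\Phi_s(x)$, with $\Phi_s^*$ mapping $T^*_{\Phi_s(x)}M\to T^*_xM$. I must also confirm that the inverses of the composites are taken in the right order, so that the exponent comes out as $+c$ on $\omega$ and not $-c$. I will verify this in a local chart, where $\Pi$ transforms as $\mathcal J\Pi\mathcal J^T$ and $\omega$ as $\mathcal J^T\omega\mathcal J$, with $\mathcal J$ the Jacobian of $\Phi_s$. There $\Pi$ and $\omega$ are inverse matrices, up to a sign convention that cancels on both sides. Inverting $\mathcal J\Pi\mathcal J^T=s^c\Pi$ immediately gives $\mathcal J^{-T}\omega\mathcal J^{-1}=s^{-c}\omega$, that is, $\mathcal J^T\omega\mathcal J=s^c\omega$, which confirms the degree.
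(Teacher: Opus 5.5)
Your proposal is correct and follows essentially the same route as the paper: naturality of $\sharp$ and $\flat$ (Lemmas~\ref{lemm:naturality-poisson} and~\ref{lemm:naturality-symplectic}), inversion using $\omega^\flat=(\Pi^\sharp)^{-1}$, and injectivity of $\eta\mapsto\eta^\flat$, with the converse obtained by reversing the chain. The only difference is cosmetic. You reach $\Phi_s^*\circ\omega^\flat\circ(\Phi_s)_*=s^c\omega^\flat$ by composing directly with $\Phi_s^*$ and $(\Phi_s)_*$, whereas the paper rewrites the inverses as $(\Phi_s^{-1})^*$, $(\Phi_s^{-1})_*$ and then substitutes $s\mapsto s^{-1}$ using the group property; your coordinate check of the sign of the exponent is an extra verification the paper does not include.
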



\begin{proof}

For any diffeomorphism $\Phi_s$, by the naturality of the Poisson bundle map (Lemma \ref{lemm:naturality-poisson}), the pushforward of the bivector $\Pi$ satisfies
\[
\bigl((\Phi_s)_* \Pi\bigr)^\sharp
= (\Phi_s)_* \circ \Pi^\sharp \circ \Phi_s^* .
\]
Hence the condition
\[
(\Phi_s)_* \Pi = s^c \Pi
\]
is equivalent to
\begin{equation}\label{eq:operator-id}
(\Phi_s)_* \circ \Pi^\sharp \circ \Phi_s^* = s^c \Pi^\sharp .
\end{equation}

Taking inverses of both sides of \eqref{eq:operator-id} and using
\(
(A\circ B\circ C)^{-1}=C^{-1}\circ B^{-1}\circ A^{-1}
\),
we obtain
\[
(\Phi_s^*)^{-1} \circ (\Pi^\sharp)^{-1} \circ \bigl((\Phi_s)_*\bigr)^{-1}
= s^{-c} (\Pi^\sharp)^{-1}.
\]
Since $(\Pi^\sharp)^{-1}=\omega^\flat$, this becomes
\[
(\Phi_s^{-1})^* \circ \omega^\flat \circ (\Phi_s^{-1})_* 
= s^{-c}\,\omega^\flat .
\]

Now replace $s$ by $s^{-1}$. Using the group property of the action,
this gives
\[
\Phi_s^* \circ \omega^\flat \circ (\Phi_s)_*
= s^{c}\,\omega^\flat .
\]
Therefore,  by the naturality of the symplectic bundle map (Lemma \ref{lemm:naturality-symplectic}) we have 
\[
(\Phi_s^*\omega)^\flat = s^c \omega^\flat .
\]
Since the map $\eta \mapsto \eta^\flat$ uniquely determines the
$2$-form, it follows that
\[
\Phi_s^* \omega = s^c \omega .
\]

The converse implication follows by reversing the argument.
\end{proof}


Finally, we show that, in the nondegenerate exact Poisson setting, the Poisson definition of a conformal momentum map
\[
\xi_M = \Pi^\sharp(dJ_\xi) + (c\xi)D
\]
descends to the symplectic formulation introduced in \cite{rastelli2026conformal}. More precisely, it is equivalent to
\[
i_{\xi_M} \omega = dJ_\xi - (c\xi)\theta,
\]
where $\theta \coloneqq -\,i_D\omega$ is the Liouville $1$-form.

\begin{proposition}[Poisson and Symplectic Forms of the Conformal Momentum Map]
Let $(M,\Pi,D)$ be an exact Poisson manifold and let $\omega$ be the symplectic form associated with the non-degenerate Poisson tensor $\Pi$. Define the $1$--form $\theta \coloneqq -\,i_D\omega$. Then the Poisson identity
\[
\xi_M=\Pi^\sharp(dJ_\xi)+(c\xi)D
\]
is equivalent to the symplectic identity
\[
i_{\xi_M}\omega=dJ_\xi-(c\xi)\theta .
\]
\end{proposition}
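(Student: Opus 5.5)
The plan is to apply the bundle isomorphism $\omega^\flat: TM\to T^*M$, $X\mapsto i_X\omega$, to both sides of the Poisson identity. Since $\Pi$ is nondegenerate, $\omega^\flat$ and $\Pi^\sharp$ are mutually inverse, as already used in the proof that nondegenerate exact Poisson manifolds are exact symplectic. Because $\omega^\flat$ is a bijection, an equality of vector fields holds if and only if the equality of their images under $\omega^\flat$ holds. This is what makes the argument an equivalence rather than a one-way implication.

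The first step is to compute the image of the right-hand side. By linearity of $\omega^\flat$,
\[
\omega^\flat\bigl(\Pi^\sharp(dJ_\xi)+(c\xi)D\bigr)=\omega^\flat\bigl(\Pi^\sharp(dJ_\xi)\bigr)+(c\xi)\,i_D\omega .
\]
The first term equals $dJ_\xi$, since $\omega^\flat\circ\Pi^\sharp=\mathrm{Id}_{T^*M}$. The second term equals $-(c\xi)\theta$, by the definition $\theta=-i_D\omega$. Meanwhile the left-hand side maps to $\omega^\flat(\xi_M)=i_{\xi_M}\omega$. Together these give exactly the symplectic identity.

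For the converse, apply $\Pi^\sharp$ to the symplectic identity. Using $\Pi^\sharp\circ\omega^\flat=\mathrm{Id}_{TM}$ and $\Pi^\sharp(\theta)=-\Pi^\sharp(\omega^\flat(D))=-D$, we recover $\xi_M=\Pi^\sharp(dJ_\xi)+(c\xi)D$.

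The only point requiring care is convention consistency. The paper defines $\Pi^\sharp(\alpha)=\Pi(\alpha,\cdot)$ and $\omega^\flat(X)=i_X\omega$, and these must genuinely be inverse to each other rather than negatives of one another. I will take this as the meaning of ``the symplectic form associated with $\Pi$'', exactly as stated in the earlier lemma. With that convention fixed, there is no real obstacle; the argument is pure linear algebra applied fibrewise.
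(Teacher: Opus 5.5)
Your proposal is correct and matches the paper's proof: you apply $\omega^\flat$ to the Poisson identity and use $\omega^\flat\circ\Pi^\sharp=\mathrm{Id}_{T^*M}$ together with $\theta=-i_D\omega$. Your explicit converse, applying $\Pi^\sharp$ and using $\Pi^\sharp(\theta)=-D$, spells out the reverse direction, which the paper leaves implicit in the invertibility of $\omega^\flat$.
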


\begin{proof}
Apply the bundle map $\omega^\flat:TM\to T^*M$ to the identity
\(
\xi_M=\Pi^\sharp(dJ_\xi)+(c\xi)D.
\)
Using $\omega^\flat(X)=i_X\omega$ and the linearity of $\omega^\flat$, we obtain
\begin{align*} 
i_{\xi_M}\omega
& = \omega^\flat\left( \Pi^\sharp(dJ_\xi) \right) + \omega^\flat\left( (c\xi)D \right)\\
& = \omega^\flat(\Pi^\sharp(dJ_\xi))
+(c\xi)i_D\omega.
\end{align*} 
Since $\Pi$ is non-degenerate, $\omega^\flat\circ\Pi^\sharp=\mathrm{Id}_{T^*M}$, hence
\(
\omega^\flat(\Pi^\sharp(dJ_\xi))=dJ_\xi 
\).
Therefore
\[
i_{\xi_M}\omega=dJ_\xi+(c\xi)i_D\omega
= dJ_\xi-(c\xi)\theta ,
\]
where $\theta=-i_D\omega$.
\end{proof}

\section{Conformal Relative Equilibria of Scaling Symmetries}
In this section we specialize the general notion of conformal relative
equilibria introduced in Definition~\ref{def:conformal-relative-equilibrium}
to the setting of Hamiltonian dynamics on exact Poisson manifolds.
Our goal is to characterize conformal relative equilibria for such systems
in terms of an augmented Hamiltonian and, in the nondegenerate case,
to recover the exact symplectic condition of~\cite{rastelli2026conformal}.
More precisely, we consider a Hamiltonian system $(M,\Pi,D,H)$ on an
exact Poisson manifold $(M,\Pi,D)$, endowed with a conformally Poisson
action $\Phi : \mathbb{R}^+ \times M \to M$ of the multiplicative group
$\mathbb{R}^+$ of degree $c$.

We assume that the Hamiltonian $H$ is conformally invariant of degree
$b$ with respect to this action, so that $\Phi$ defines a scaling
symmetry of the Hamiltonian system in the sense of
Definition~\ref{def:scaling-symmetry}:
\[
\Phi_s^* H = s^b H, \qquad (\Phi_s)_* \Pi = s^c \Pi, \quad s \in \mathbb{R}^+.
\]
Under these assumptions, Proposition~\ref{prop:conformal-XH} implies
that the associated Hamiltonian vector field $X_H = \Pi^\sharp(dH)$
is conformally invariant of degree $a = c - b$.
We furthermore assume that the action admits an associated conformal
momentum map $J$ of degree $c$ in the sense of
Definition~\ref{def:conformal-momentum-map}.
The proposition below shows that conformal relative equilibria are
precisely the zeros of a suitable conformal Hamiltonian vector field
associated with an augmented Hamiltonian.

\begin{proposition}[Augmented Hamiltonian characterization of CRE]\label{prop:augmented-characterization}
Let $(M,\Pi,D)$ be an exact Poisson manifold and let
$\Phi : \mathbb{R}^+ \times M \to M$ be a conformally Poisson action
of degree $c \in \mathbb{R}$.
Let $J : M \to \mathbb{R}$ be a conformal momentum map of degree $c$
for this action.
For a conformally invariant Hamiltonian $H \in C^\infty(M)$ and a Lie
algebra element $\xi \in \mathbb{R}$, define the \textbf{conformal Hamiltonian
vector field} associated with the \textbf{augmented Hamiltonian}
$H_\xi = H - J_\xi$ by
\begin{equation}\label{eq:conf-Ham-field}
    X_{H,\xi}^{\mathrm{conf}} \coloneqq \Pi^\sharp(dH_\xi) - (c\xi) D.
\end{equation}
Then a point $z_e \in M$ is a conformal relative equilibrium if and only if
there exists $\xi \in \mathbb{R}$ such that $z_e$ is a zero of this vector field, that is, 
\begin{equation}\label{eq:conf-RE-condition}
    X_{H,\xi}^{\mathrm{conf}}(z_e) = 0.
\end{equation}
\end{proposition}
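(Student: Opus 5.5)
The argument is a direct unwinding of the definitions, using linearity of $\Pi^\sharp$ and the defining identity of the conformal momentum map. Fix $\xi\in\mathbb{R}$. By Definition~\ref{def:conformal-momentum-map}, the infinitesimal generator satisfies
\[
\xi_M=\Pi^\sharp(dJ_\xi)+(c\xi)D
\]
as vector fields on all of $M$. Since $H_\xi=H-J_\xi$ and $\Pi^\sharp$ is $C^\infty(M)$-linear on one-forms, we have $\Pi^\sharp(dH_\xi)=\Pi^\sharp(dH)-\Pi^\sharp(dJ_\xi)=X_H-\Pi^\sharp(dJ_\xi)$. Substituting into \eqref{eq:conf-Ham-field} gives the global identity
\[
X_{H,\xi}^{\mathrm{conf}}=X_H-\Pi^\sharp(dJ_\xi)-(c\xi)D=X_H-\xi_M .
\]

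With this identity in hand, both directions follow pointwise. Suppose $z_e$ is a conformal relative equilibrium in the sense of Definition~\ref{def:conformal-relative-equilibrium}. Then there is $\xi$ with $X_H(z_e)=\xi_M(z_e)$, so $X_{H,\xi}^{\mathrm{conf}}(z_e)=0$ for that same $\xi$. Conversely, if $X_{H,\xi}^{\mathrm{conf}}(z_e)=0$ for some $\xi$, then $X_H(z_e)=\xi_M(z_e)$, and $z_e$ is a conformal relative equilibrium.

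Two small points need checking. First, the definition of a conformal relative equilibrium requires $X_H$ to be conformally invariant of some degree $a$. This follows from Proposition~\ref{prop:conformal-XH}, since $H$ is assumed conformally invariant of degree $b$ and $\Phi$ is conformally Poisson of degree $c$, giving $a=c-b$. Second, $J_\xi=\xi J$ and $\xi\mapsto\xi_M$ are both linear in $\xi$, so the momentum-map identity holds for the particular $\xi$ appearing in the equilibrium condition.

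No step is a real obstacle; the only thing to watch is the sign convention. The augmented field uses $-(c\xi)D$ while the momentum map uses $+(c\xi)D$, and it is this opposite sign that makes the $D$-terms combine into exactly $-\xi_M$. I would write out that cancellation explicitly.
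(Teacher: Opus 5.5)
Your proof is correct and follows the paper's argument: you substitute the momentum-map identity $\xi_M=\Pi^\sharp(dJ_\xi)+(c\xi)D$ into $X_H(z_e)=\xi_M(z_e)$ and use linearity of $\Pi^\sharp$ and $d$ to rewrite it as $X_{H,\xi}^{\mathrm{conf}}(z_e)=0$. The only difference is cosmetic: you first record the global identity $X_{H,\xi}^{\mathrm{conf}}=X_H-\xi_M$, which makes both directions of the equivalence immediate.
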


\begin{proof}
By Definition~\ref{def:conformal-relative-equilibrium}, $z_e$ is a conformal
relative equilibrium if and only if there exists $\xi \in \mathbb{R}$
such that
\[
X_H(z_e) = \xi_M(z_e),
\]
where $\xi_M$ denotes the infinitesimal generator of the action corresponding
to $\xi$.
Using $X_H = \Pi^\sharp(dH)$ and, by
Definition~\ref{def:conformal-momentum-map},
$\xi_M = \Pi^\sharp(dJ_\xi) + (c\xi)D$, we obtain
\[
    \Pi^\sharp(dH)(z_e)
        = \Pi^\sharp(dJ_\xi)(z_e) + (c\xi)D(z_e).
\]

Rearranging all terms to one side and using the linearity of
$\Pi^\sharp$ and of the exterior derivative $d$ gives
\[
\Pi^\sharp\big(d(H - J_\xi)\big)(z_e) - (c\xi)D(z_e) = 0.
\]
Recognizing $H_\xi = H - J_\xi$, the left-hand side is precisely
$X_{H,\xi}^{\mathrm{conf}}(z_e)$ as defined in \eqref{eq:conf-Ham-field},
which yields the condition \eqref{eq:conf-RE-condition}.
\end{proof}

When $\Pi$ is nondegenerate, the condition of the proposition above
can be reformulated as a  condition on the augmented
Hamiltonian $H_\xi$, involving the Liouville one-form $\theta$ and
recovering the exact symplectic result of~\cite{rastelli2026conformal}.

\begin{corollary}[Symplectic characterization of CRE] 
Under the hypotheses of Proposition~\ref{prop:augmented-characterization}, assume in addition that $\Pi$ is nondegenerate, let $\omega$ be the associated symplectic form, and set $\theta \coloneqq -\,i_D\omega$. Then a point $z_e\in M$ is a conformal relative equilibrium with velocity $\xi\in\mathbb{R}$ if and only if
\[
dH_\xi(z_e) + (c\xi)\,\theta(z_e)=0.
\]
In particular, this recovers the exact symplectic condition obtained in~\cite{rastelli2026conformal}.
\end{corollary}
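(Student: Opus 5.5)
The plan is to start from Proposition~\ref{prop:augmented-characterization}. It tells us that $z_e$ is a conformal relative equilibrium with velocity $\xi$ if and only if
\[
\Pi^\sharp(dH_\xi)(z_e)-(c\xi)D(z_e)=0 .
\]
Since $\Pi$ is nondegenerate, the bundle map $\omega^\flat:TM\to T^*M$ is a fiberwise linear isomorphism inverse to $\Pi^\sharp$. A tangent vector at $z_e$ therefore vanishes if and only if its image under $\omega^\flat_{z_e}$ vanishes. I will apply $\omega^\flat$ pointwise at $z_e$ to the vector $X_{H,\xi}^{\mathrm{conf}}(z_e)$.

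The computation then proceeds exactly as in the proof of the proposition relating the Poisson and symplectic forms of the conformal momentum map. By linearity,
\[
\omega^\flat\bigl(X_{H,\xi}^{\mathrm{conf}}\bigr)=\omega^\flat\bigl(\Pi^\sharp(dH_\xi)\bigr)-(c\xi)\,i_D\omega .
\]
The identity $\omega^\flat\circ\Pi^\sharp=\mathrm{Id}_{T^*M}$, established in the proof that nondegenerate exact Poisson manifolds are exact symplectic, gives $\omega^\flat(\Pi^\sharp(dH_\xi))=dH_\xi$. The definition $\theta=-i_D\omega$ then turns the second term into $+(c\xi)\theta$. Hence $\omega^\flat(X_{H,\xi}^{\mathrm{conf}})=dH_\xi+(c\xi)\theta$, and evaluating at $z_e$ and using injectivity of $\omega^\flat_{z_e}$ yields the stated equivalence.

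The only point needing care is sign bookkeeping. The Liouville form is $\theta=-i_D\omega$, consistent with Lemma~\ref{lem:exact-symplectic-manifold}, and the conformal vector field carries $-(c\xi)D$. The two minus signs combine to give $+(c\xi)\theta$. I also need to confirm that this matches the convention $\omega=-d\theta$ of~\cite{rastelli2026conformal}, which the lemma guarantees because $\mathcal L_D\omega=\omega$.

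There is no genuine obstacle. To justify the final sentence of the statement, I will remark that the paper has already shown that conformally Poisson actions coincide with conformally symplectic ones in the nondegenerate case. It has also shown that the Poisson momentum map condition is equivalent to $i_{\xi_M}\omega=dJ_\xi-(c\xi)\theta$. So $J$, $H_\xi$ and $\theta$ are exactly the objects of~\cite{rastelli2026conformal}, and the derived condition is literally theirs.
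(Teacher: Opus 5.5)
Your proposal is correct and follows essentially the same route as the paper: start from the zero condition of Proposition~\ref{prop:augmented-characterization}, apply $\omega^\flat=(\Pi^\sharp)^{-1}$ pointwise at $z_e$, and rewrite $-(c\xi)\,i_D\omega$ as $+(c\xi)\theta$. The only cosmetic difference is that you apply $\omega^\flat$ to the whole vector $X_{H,\xi}^{\mathrm{conf}}(z_e)$ and invoke injectivity, whereas the paper applies it to both sides of $\Pi^\sharp(dH_\xi)(z_e)=(c\xi)D(z_e)$.
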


\begin{proof}
From the proposition above we have
\[
    \Pi^\sharp\big(d(H - J_\xi)\big)(z_e) = (c\xi)D(z_e).
\]
Since $\Pi$ is nondegenerate, $\Pi^\sharp$ is invertible with
$(\Pi^\sharp)^{-1} = \omega^\flat$, so applying $\omega^\flat$ to
both sides yields
\[
    d(H - J_\xi)(z_e) = (c\xi)\,i_D\omega(z_e).
\]
Using $\theta = -\,i_D\omega$ and $H_\xi = H - J_\xi$ gives
\[
    dH_\xi(z_e) + (c\xi)\,\theta(z_e) = 0,
\]
as claimed.
\end{proof}

\begin{remark}
In the special case $c = 0$, the conformal relative equilibrium
condition reduces to
\[
    X_{H_\xi}(z_e) = \Pi^\sharp\big(dH_\xi\big)(z_e) = 0,
\]
which is the standard augmented Hamiltonian condition for relative
equilibria on a Poisson manifold. When $\Pi$ is nondegenerate,
$\Pi^\sharp$ is invertible, so this is equivalent to
$dH_\xi(z_e) = 0$, i.e.\ $z_e$ is a critical point of the augmented
Hamiltonian $H_\xi$. When $\Pi$ is degenerate, however, $\Pi^\sharp$
has a nontrivial kernel consisting of the Casimir directions, so
$X_{H_\xi}(z_e) = 0$ does not force $dH_\xi(z_e) = 0$; the one-form
$dH_\xi(z_e)$ may still have nonzero components along those directions.
\end{remark}

\begin{example}[Augmented Hamiltonian for the free rigid body]\label{ex:so3-augmented}
We revisit the free rigid body on $\mathfrak{so}(3)^*$ from Example~\ref{ex:rigid-body} to illustrate the augmented Hamiltonian characterization. The exact Poisson manifold is $M=\mathbb{R}^3$ with the standard Lie--Poisson tensor $\Pi(x)$ and Liouville vector field $D = x_i \partial_{x_i}$. The scaling action $\Phi_s(x)=sx$ is conformally Poisson of degree $c=1$.

For a Lie algebra element $\xi \in \mathbb{R}$, the corresponding infinitesimal generator of the scaling action is
\[
\xi_M(x) = \frac{d}{dt}\bigg|_{t=0} \Phi_{e^{t\xi}}(x) = \xi x = \xi D(x).
\]
Since $\xi_M = c\xi D$ with $c=1$, Proposition~\ref{prop:casimir-momentum} guarantees that any conformal momentum map $J$ for this action is necessarily a Casimir function. We can therefore choose the standard scaled Casimir $J(x) = \frac{1}{2}(x_1^2 + x_2^2 + x_3^2)$, which is homogeneous of degree $2$.

For the kinetic energy Hamiltonian $H(x) = \frac{1}{2}(\alpha x_1^2 + \beta x_2^2 + \gamma x_3^2)$, the augmented Hamiltonian with parameter $\xi \in \mathbb{R}$ is:
\[
H_\xi(x) = H(x) - J_\xi(x) = \frac{1}{2}\bigl((\alpha - \xi)x_1^2 + (\beta - \xi)x_2^2 + (\gamma - \xi)x_3^2\bigr).
\]
By Proposition \ref{prop:augmented-characterization}, a conformal relative equilibrium \(x_e \neq 0\) is a zero of the conformal Hamiltonian vector field. Since \(J_\xi\) is a Casimir, its Hamiltonian vector field vanishes ($\Pi^\sharp(dJ_\xi) = 0$), so the conformal vector field simplifies to:
\[
X_{H,\xi}^{\mathrm{conf}}(x) = \Pi^\sharp(dH) - \xi D = 
\begin{pmatrix}
(\beta-\gamma)x_2 x_3 - \xi x_1 \\
(\gamma-\alpha)x_1 x_3 - \xi x_2 \\
(\alpha-\beta)x_1 x_2 - \xi x_3
\end{pmatrix}.
\]
Setting $X_{H,\xi}^{\mathrm{conf}}(x_e) = 0$ yields the system:
\[
(\beta-\gamma)x_2 x_3 = \xi x_1,\qquad
(\gamma-\alpha)x_1 x_3 = \xi x_2,\qquad
(\alpha-\beta)x_1 x_2 = \xi x_3.
\]
These are exactly the componentwise algebraic conditions derived in Example~\ref{ex:rigid-body}. This confirms that the conformal relative equilibria are precisely the zeros of the augmented Hamiltonian dynamics, gracefully recovering our earlier conclusion that the rigid body admits no nontrivial conformal relative equilibria.
\end{example}

\section{Lie--Poisson Manifolds and Scaling Symmetries}
\label{sec:lie-poisson}

In this section we specialize the general theory developed above to
Lie--Poisson manifolds of the form $(\mathfrak{g}^*, \Pi_{\mathrm{LP}})$,
where $\mathfrak{g}$ is a finite-dimensional real Lie algebra and
$\Pi_{\mathrm{LP}}$ is the Lie--Poisson tensor introduced below. We show
that $\mathfrak{g}^*$ endowed with the natural scaling action is an exact
Poisson manifold, and we obtain a criterion, expressed purely in terms of
the adjoint representation of $\mathfrak{g}$, for the existence of
nontrivial conformal relative equilibria.

\subsection{The Scaling Action as an Exact Poisson Structure}
Let $\mathfrak{g}$ be a finite-dimensional real Lie algebra with Lie
bracket $[\cdot,\cdot]:\mathfrak{g}\times\mathfrak{g}\to\mathfrak{g}$,
and let $\mathfrak{g}^*$ denote its linear dual. Let
$(e_1,\ldots,e_n)$ be a basis of $\mathfrak{g}$, with Lie brackets
\[
    [e_i,e_j] = C^k_{ij}\,e_k,
\]
where $C^k_{ij}\in\mathbb{R}$ are the structure constants (summation
over repeated indices is implied). Let
$\langle\cdot,\cdot\rangle:\mathfrak{g}^*\times\mathfrak{g}\to\mathbb{R}$
denote the natural pairing, and let $(e^1,\ldots,e^n)$ be the dual basis
of $\mathfrak{g}^*$, defined by $\langle e^i,e_j\rangle=\delta^i_j$.
 We use $(\mu_1,\ldots,\mu_n)$ as coordinates on
$\mathfrak{g}^*$, where $\mu_i = \langle\mu,e_i\rangle$ are the
components of $\mu\in\mathfrak{g}^*$ in the dual basis.

The {\bf Lie--Poisson tensor} on $\mathfrak{g}^*$ is the bivector field
$\Pi_{\mathrm{LP}}$ with components
\[
    \Pi^{ij}_{\mathrm{LP}}(\mu) = C^k_{ij}\,\mu_k.
\]
One readily verifies that this satisfies the Jacobi identity, so
$(\mathfrak{g}^*, \Pi_{\mathrm{LP}})$ is a Poisson manifold;
see, e.g.,~\cite{marsden2013introduction}. Note that
$\Pi_{\mathrm{LP}}$ is linear in $\mu$; this reflects the fact that
the Lie--Poisson structure is entirely determined by the linear
structure of $\mathfrak{g}$. The associated Poisson bracket is
\[
    \{f,g\}(\mu)
    = \Pi_{\mathrm{LP}}^{ij}(\mu)\,
    \frac{\partial f}{\partial\mu_i}\,\frac{\partial g}{\partial\mu_j}
    = \bigl\langle\mu,\,[df(\mu),\,dg(\mu)]\bigr\rangle,
    \qquad f,g\in C^\infty(\mathfrak{g}^*),
\]
where we identify $df(\mu)\in T^*_\mu\mathfrak{g}^*\cong\mathfrak{g}$.
The Hamiltonian vector field of $H\in C^\infty(\mathfrak{g}^*)$ is
\[
    X_H(\mu) = \mathrm{ad}^*_{dH(\mu)}\,\mu,
\]
where $\mathrm{ad}^*_\xi:\mathfrak{g}^*\to\mathfrak{g}^*$ denotes the
coadjoint action of $\xi\in\mathfrak{g}$, defined by
$\langle\mathrm{ad}^*_\xi\,\mu,\,\eta\rangle =
-\langle\mu,[\xi,\eta]\rangle$ for all $\eta\in\mathfrak{g}$.
Consider the smooth action of the multiplicative group $\mathbb{R}^+$
on $\mathfrak{g}^*$ defined by
\[
    \Phi : \mathbb{R}^+ \times \mathfrak{g}^* \to \mathfrak{g}^*,
    \qquad
    \Phi_s(\mu) = s\mu.
\]
One readily verifies that $\Phi_1 = \mathrm{id}$,
$\Phi_s \circ \Phi_r = \Phi_{sr}$ for all $r,s\in\mathbb{R}^+$,
and that the action map $(s,\mu)\mapsto s\mu$ is smooth, so
$\Phi$ is a smooth action of $(\mathbb{R}^+,\cdot)$ on $\mathfrak{g}^*$,
which we call the \textbf{scaling action}. The infinitesimal generator
corresponding to $1\in\mathrm{Lie}(\mathbb{R}^+)\cong\mathbb{R}$ is
\[
    D=\mu_i\,\partial_{\mu_i},
\]
characterized by
\[
    D(\mu)=\frac{d}{dt}\bigg|_{t=0}\Phi_{e^t}(\mu)=\mu
\]
for all $\mu\in\mathfrak{g}^*$.

The next proposition shows that the natural scaling action on
$\mathfrak{g}^*$ places the Lie--Poisson manifold
$(\mathfrak{g}^*,\Pi_{\mathrm{LP}})$ within the exact Poisson framework
developed above.

\begin{proposition}[Lie--Poisson manifolds are exact Poisson]
\label{prop:LP-exact}
Let $\mathfrak{g}$ be a finite-dimensional real Lie algebra, and let
$D$ be the infinitesimal generator of the scaling action
$\Phi:\mathbb{R}^+\times\mathfrak{g}^*\to\mathfrak{g}^*$ corresponding to
the Lie algebra element $1\in \mathrm{Lie}(\mathbb{R}^+)$. Then the scaling action
is conformally Poisson of degree $c=1$, that is,
\[
    (\Phi_s)_*\,\Pi_{\mathrm{LP}} = s\,\Pi_{\mathrm{LP}}
    \qquad\text{for all } s\in\mathbb{R}^+.
\]
Consequently, $(\mathfrak{g}^*,\Pi_{\mathrm{LP}},D)$ is an exact Poisson
manifold.
\end{proposition}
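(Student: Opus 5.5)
The plan is to verify the pushforward identity directly in coordinates and then invoke Proposition~\ref{prop:scaling-exactness} to obtain exactness. Since $\Phi_s(\mu)=s\mu$ is linear, its tangent map at every point is multiplication by $s$: in the coordinates $(\mu_1,\ldots,\mu_n)$ the Jacobian is $s\,I_n$, independent of the base point.

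For a bivector field, the pushforward under a diffeomorphism $\varphi$ is given at a point $\nu=\varphi(\mu)$ by
\[
\bigl(\varphi_*\Pi\bigr)^{ij}(\nu)=\frac{\partial \varphi^i}{\partial\mu_k}(\mu)\,\frac{\partial \varphi^j}{\partial\mu_l}(\mu)\,\Pi^{kl}(\mu),\qquad \mu=\varphi^{-1}(\nu).
\]
This is the same matrix rule $\mathcal{J}_s\Pi\mathcal{J}_s^T$ used in Example~\ref{ex:non-casimir-momentum}. With $\varphi=\Phi_s$ we have $\mu=s^{-1}\nu$, and therefore
\[
\bigl((\Phi_s)_*\Pi_{\mathrm{LP}}\bigr)^{ij}(\nu)=s^2\,C^k_{ij}\,(s^{-1}\nu)_k=s\,C^k_{ij}\,\nu_k=s\,\Pi^{ij}_{\mathrm{LP}}(\nu).
\]
The only point to watch is that the two factors of $s$ from the Jacobians are partly cancelled by the factor $s^{-1}$ from evaluating the linear tensor at the preimage. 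This careful bookkeeping of where the tensor is evaluated is the only step I expect to need attention; everything else is routine. The computation gives $(\Phi_s)_*\Pi_{\mathrm{LP}}=s\,\Pi_{\mathrm{LP}}$, so the action is conformally Poisson of degree $c=1$.

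Since $c=1\neq0$, Proposition~\ref{prop:scaling-exactness} applies. Its proof shows that the generator $Z$ of the flow $t\mapsto\Phi_{e^t}$ satisfies $\mathcal{L}_Z\Pi_{\mathrm{LP}}=-c\,\Pi_{\mathrm{LP}}=-\Pi_{\mathrm{LP}}$. Here $Z$ is exactly the vector field $D=\mu_i\partial_{\mu_i}$ defined above, and $D=\tfrac1c Z=Z$. Hence $\mathcal{L}_D\Pi_{\mathrm{LP}}=-\Pi_{\mathrm{LP}}$, and $(\mathfrak{g}^*,\Pi_{\mathrm{LP}},D)$ is an exact Poisson manifold.

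As a cross-check, one can compute the Lie derivative directly:
\[
(\mathcal{L}_D\Pi)^{ij}=D(\Pi^{ij})-\Pi^{kj}\partial_k D^i-\Pi^{ik}\partial_k D^j=\Pi^{ij}-2\Pi^{ij}=-\Pi^{ij}.
\]
Here $D(\Pi^{ij})=\Pi^{ij}$ by Euler's identity for functions linear in $\mu$, and $\partial_k D^i=\delta^i_k$. This confirms the result.
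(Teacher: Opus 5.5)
Your proof is correct and follows the paper's route: you verify $(\Phi_s)_*\Pi_{\mathrm{LP}}=s\,\Pi_{\mathrm{LP}}$ directly and then invoke Proposition~\ref{prop:scaling-exactness} with $c=1$, so that $D=Z$. The only differences are that you compute in coordinates via the Jacobian $s\,I_n$ where the paper pulls back covectors coordinate-free ($\Phi_s^*\alpha=s\alpha$), and that you add a valid direct Lie-derivative check $\mathcal{L}_D\Pi=\Pi-2\Pi=-\Pi$, which the paper does not include.
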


\begin{proof}
Let $\nu=\Phi_s(\mu)=s\mu$. To prove that
$(\Phi_s)_*\Pi_{\mathrm{LP}}=s\,\Pi_{\mathrm{LP}}$, it is enough to evaluate
both sides on arbitrary covectors
$\alpha,\beta\in T_\nu^*\mathfrak{g}^*\cong\mathfrak{g}$.

By definition of the pushforward of a bivector,
\[
\bigl((\Phi_s)_*\Pi_{\mathrm{LP}}\bigr)_\nu(\alpha,\beta)
=
\Pi_{\mathrm{LP},\mu}\bigl(\Phi_s^*\alpha,\Phi_s^*\beta\bigr).
\]
Since $\Phi_s:\mathfrak{g}^*\to\mathfrak{g}^*$ is the linear map
$\mu\mapsto s\mu$, its pullback on covectors is multiplication by $s$, so
\[
\Phi_s^*\alpha=s\alpha,
\qquad
\Phi_s^*\beta=s\beta.
\]
Using the defining formula for the Lie--Poisson tensor, we obtain
\[
\bigl((\Phi_s)_*\Pi_{\mathrm{LP}}\bigr)_\nu(\alpha,\beta)
=
\Pi_{\mathrm{LP},\mu}(s\alpha,s\beta)
=
\langle \mu,[s\alpha,s\beta]\rangle
=
s^2\langle \mu,[\alpha,\beta]\rangle.
\]
Since $\nu=s\mu$, this becomes
\[
\bigl((\Phi_s)_*\Pi_{\mathrm{LP}}\bigr)_\nu(\alpha,\beta)
=
s\,\langle \nu,[\alpha,\beta]\rangle
=
s\,\Pi_{\mathrm{LP},\nu}(\alpha,\beta).
\]
Therefore $(\Phi_s)_*\Pi_{\mathrm{LP}}=s\,\Pi_{\mathrm{LP}}$, so the scaling
action is conformally Poisson of degree $c=1$.

Now let $D$ be the infinitesimal generator of the scaling action corresponding to the Lie algebra element $1\in \mathrm{Lie}(\mathbb{R}^+)$. Since
$(\Phi_s)_*\Pi_{\mathrm{LP}}=s\,\Pi_{\mathrm{LP}}$ for all $s\in\mathbb{R}^+$,
Proposition~\ref{prop:scaling-exactness} applies with $c=1$ and yields
\[
\mathcal{L}_D\Pi_{\mathrm{LP}}=-\Pi_{\mathrm{LP}}.
\]
Hence $(\mathfrak{g}^*,\Pi_{\mathrm{LP}},D)$ is an exact Poisson manifold.
\end{proof}

\begin{remark}[Conformal momentum maps on Lie--Poisson manifolds]\label{rem:lp-casimir}
Consider the natural scaling action $\Phi_s(x) = sx$ on a Lie--Poisson manifold $\mathfrak{g}^*$. By Proposition
\ref{prop:LP-exact}, this action is conformally Poisson of degree $c=1$, and its infinitesimal generator for $\xi \in \mathbb{R}$ is exactly $\xi_M = \xi D$. Therefore, by Proposition \ref{prop:casimir-momentum}, any conformal momentum map for this action is necessarily a Casimir function. As a result, the augmented Hamiltonian always takes the form $H_\xi = H - \xi C$ for some Casimir $C$. Note that because $\Pi^\sharp(dC) = 0$, the choice of Casimir does not affect the conformal Hamiltonian vector field $X_{H,\xi}^{\mathrm{conf}} = \Pi^\sharp(dH) - \xi D$, and thus the set of conformal relative equilibria is entirely independent of this choice.
\end{remark}

\subsection{A Criterion for Conformal Relative Equilibria}
We now derive an explicit condition for conformal relative equilibria
on $(\mathfrak{g}^*, \Pi_{\mathrm{LP}})$. Since the infinitesimal
generator of the scaling action corresponding to $\xi\in\mathbb{R}$
is $\xi_M = \xi\,D$, where $ D $ is the Euler vector field,  the conformal relative equilibrium condition
$X_H(\mu_e)=\xi_M(\mu_e)$ of Definition~\ref{def:conformal-relative-equilibrium} becomes
\[
    \mathrm{ad}^*_{dH(\mu_e)}\,\mu_e = \xi\,\mu_e,
\]
where we used the expression $X_H(\mu) = \mathrm{ad}^*_{dH(\mu)}\mu$
for the Hamiltonian vector field on $\mathfrak{g}^*$ and the fact that
$D(\mu_e) = \mu_e$. Setting $\zeta = dH(\mu_e)\in\mathfrak{g}$, this
says that $\mu_e$ is an eigenvector of $\mathrm{ad}^*_\zeta$ with
eigenvalue $\xi$. For the equilibrium to be nontrivial we need
$\mu_e\neq 0$ and $\xi\neq 0$; as seen in Example~\ref{ex:rigid-body},
the latter condition can fail — in the case of $\mathfrak{so}(3)^*$
the orthogonality of $X_H$ forces $\xi=0$ for every $\mu_e\neq 0$,
so that no nontrivial conformal relative equilibria exist. The
following proposition gives a purely Lie-algebraic characterisation
of exactly when a nontrivial conformal relative equilibrium can exist.

\begin{proposition}[Criterion for CRE on $\mathfrak{g}^*$]
\label{prop:LP-criterion}
Let $\mathfrak{g}$ be a finite-dimensional real Lie algebra, endow
$\mathfrak{g}^*$ with its Lie--Poisson structure and the scaling
action $\Phi_s(\mu)=s\mu$.  The following are equivalent.
\begin{enumerate}
    \item There exists a homogeneous Hamiltonian $H\in
          C^\infty(\mathfrak{g}^*)$ of some degree $b$
          and a nonzero point $\mu_e\in\mathfrak{g}^*$ that is a
          nontrivial conformal relative equilibrium with velocity
          $\xi\neq 0$.
    \item There exist $\zeta\in\mathfrak{g}$,
          $\mu\in\mathfrak{g}^*\setminus\{0\}$, and
          $\xi\in\mathbb{R}\setminus\{0\}$ such that
          $\mathrm{ad}^*_\zeta\,\mu = \xi\,\mu$.
    \item There exist $\zeta\in\mathfrak{g}$,
          $\eta\in\mathfrak{g}\setminus\{0\}$, and
          $\xi\in\mathbb{R}\setminus\{0\}$ such that
          $\mathrm{ad}_\zeta\,\eta = \xi\,\eta$;
          such a $\zeta$ is called a \emph{hyperbolic element}
          of $\mathfrak{g}$.
\end{enumerate}
Moreover, when these conditions hold, the linear Hamiltonian
$H_\zeta(\nu)=\langle\nu,\zeta\rangle$ realises condition~\textup{(1)}
with $\mu_e = \mu$ as the conformal relative equilibrium and velocity
$\xi$; furthermore $H_\zeta(\mu_e) = 0$.
\end{proposition}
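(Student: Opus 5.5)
I will prove the cycle (1)$\Rightarrow$(2)$\Rightarrow$(3)$\Rightarrow$(2)$\Rightarrow$(1), and then check the supplementary claim about $H_\zeta$ directly.

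For (1)$\Rightarrow$(2), I will use the reformulation preceding the statement. Since $\xi_M=\xi D$ and $X_H(\mu)=\mathrm{ad}^*_{dH(\mu)}\mu$, a conformal relative equilibrium $\mu_e\neq0$ with velocity $\xi\neq0$ satisfies $\mathrm{ad}^*_{\zeta}\mu_e=\xi\mu_e$ with $\zeta=dH(\mu_e)\in\mathfrak g$. Homogeneity of $H$ plays no role in this direction.

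The equivalence (2)$\Leftrightarrow$(3) is pure linear algebra. By definition, $\mathrm{ad}^*_\zeta=-(\mathrm{ad}_\zeta)^T$ as an endomorphism of $\mathfrak g^*$, since $\langle \mathrm{ad}^*_\zeta\mu,\eta\rangle=-\langle\mu,\mathrm{ad}_\zeta\eta\rangle$. A linear map and its transpose have the same characteristic polynomial, so real eigenvalues of $\mathrm{ad}^*_\zeta$ are exactly the negatives of real eigenvalues of $\mathrm{ad}_\zeta$. Thus if $\mathrm{ad}^*_\zeta\mu=\xi\mu$ with $\mu\ne0$, $\xi\neq0$, then $-\xi$ is a nonzero real eigenvalue of $\mathrm{ad}_\zeta$. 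Replacing $\zeta$ by $-\zeta$ gives $\mathrm{ad}_{-\zeta}\eta=\xi\eta$ for some $\eta\neq0$. Conversely, from $\mathrm{ad}_\zeta\eta=\xi\eta$ I obtain an eigenvector $\mu$ of $\mathrm{ad}^*_{-\zeta}$ with eigenvalue $\xi$. This sign/rescaling bookkeeping is the only place one must be careful: the statement quantifies existentially over $\zeta$ and $\xi$, so changing $\zeta\mapsto-\zeta$ or $\zeta\mapsto t\zeta$ (which rescales eigenvalues by $t$) is harmless. In fact, hyperbolicity of some element is all that matters.

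For (2)$\Rightarrow$(1) and the supplementary claim, take $H_\zeta(\nu)=\langle\nu,\zeta\rangle$. It is linear, hence homogeneous of degree $b=1$, and $dH_\zeta(\nu)=\zeta$ for every $\nu$. Then $X_{H_\zeta}(\mu)=\mathrm{ad}^*_\zeta\mu=\xi\mu=\xi D(\mu)=\xi_M(\mu)$, so $\mu$ is a conformal relative equilibrium with velocity $\xi\neq0$. Finally,
\[
\xi H_\zeta(\mu)=\langle \xi\mu,\zeta\rangle=\langle\mathrm{ad}^*_\zeta\mu,\zeta\rangle=-\langle\mu,[\zeta,\zeta]\rangle=0,
\]
and since $\xi\neq0$ this gives $H_\zeta(\mu_e)=0$. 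I expect no serious obstacle. The main point needing care is the transpose/sign argument in (2)$\Leftrightarrow$(3), together with noting that the construction (2)$\Rightarrow$(1) uses the same $\zeta$ as in (2), not the one from (3), so that "$\mu_e=\mu$ with velocity $\xi$" holds literally.
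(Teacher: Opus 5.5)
Your proposal is correct and follows the paper's proof step for step. Like the paper, you use $\zeta=dH(\mu_e)$ for (1)$\Rightarrow$(2), the linear Hamiltonian $H_\zeta$ paired against $\zeta$ for (2)$\Rightarrow$(1) and $H_\zeta(\mu_e)=0$, and a transpose/characteristic-polynomial argument for (2)$\Leftrightarrow$(3). The one difference is in that last step: the paper calls $\mathrm{ad}^*_\zeta$ the transpose of $\mathrm{ad}_\zeta$ with the same spectrum, but under the stated definition $\mathrm{ad}^*_\zeta=-(\mathrm{ad}_\zeta)^T$, as you note, so your $\zeta\mapsto-\zeta$ replacement is the correct, harmless fix.
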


\begin{proof}
\textit{(1)$\Rightarrow$(2).}
Suppose $\mu_e\neq 0$ is a nontrivial conformal relative equilibrium
for $H$ with velocity $\xi\neq 0$.  Then
\[
    X_H(\mu_e) = \mathrm{ad}^*_{dH(\mu_e)}\,\mu_e = \xi\,\mu_e,
\]
so condition~(2) holds with $\zeta = dH(\mu_e)$ and $\mu = \mu_e$.

\textit{(2)$\Rightarrow$(1).}
Suppose $\mathrm{ad}^*_\zeta\,\mu = \xi\,\mu$ for some
$\zeta\in\mathfrak{g}$, $\mu\neq 0$, and $\xi\neq 0$.
Consider the linear Hamiltonian $H_\zeta(\nu)=\langle\nu,\zeta\rangle$.
Since $H_\zeta$ is linear in $\nu$, its differential is the constant
$dH_\zeta(\nu)=\zeta$ for all $\nu$, and hence its Hamiltonian vector
field is
\[
    X_{H_\zeta}(\nu)
    = \mathrm{ad}^*_{dH_\zeta(\nu)}\,\nu
    = \mathrm{ad}^*_\zeta\,\nu.
\]
Evaluating at $\mu$ and using the hypothesis gives
\[
    X_{H_\zeta}(\mu) = \mathrm{ad}^*_\zeta\,\mu = \xi\,\mu = \xi\,D(\mu),
\]
so $\mu$ is a nontrivial conformal relative equilibrium with velocity
$\xi$.  Finally, evaluating $\mathrm{ad}^*_\zeta\,\mu=\xi\,\mu$
on $\zeta$ gives
\[
    \xi\,\mu(\zeta)
    = (\mathrm{ad}^*_\zeta\,\mu)(\zeta)
    = -\mu([\zeta,\zeta]) = 0,
\]
and since $\xi\neq 0$ we conclude $H_\zeta(\mu_e) = \mu(\zeta) = 0$.

\textit{(2)$\Leftrightarrow$(3).}
Since $\mathrm{ad}^*_\zeta$ is the transpose of $\mathrm{ad}_\zeta$
with respect to the natural pairing $\langle\cdot,\cdot\rangle$,
they share the same characteristic polynomial and hence the same
spectrum.  Therefore $\mathrm{ad}^*_\zeta$ has a nonzero real
eigenvalue if and only if $\mathrm{ad}_\zeta$ does.
\end{proof}

\begin{remark}
Condition~(3) provides a purely Lie-algebraic test requiring no
knowledge of the Hamiltonian. For instance, for any
$\zeta=(\zeta_1,\zeta_2,\zeta_3)\in\mathfrak{so}(3)\cong\mathbb{R}^3$,
the adjoint operator $\mathrm{ad}_\zeta(v)=\zeta\times v$ is represented
by the skew-symmetric matrix
\[
\mathrm{ad}_\zeta =
\begin{pmatrix}
0 & -\zeta_3 & \zeta_2 \\
\zeta_3 & 0 & -\zeta_1 \\
-\zeta_2 & \zeta_1 & 0
\end{pmatrix}.
\]
Its characteristic polynomial is
\(
\lambda\bigl(\lambda^2+\zeta_1^2+\zeta_2^2+\zeta_3^2\bigr)=0,
\)
giving eigenvalues
\(
\{0,\pm i\sqrt{\zeta_1^2+\zeta_2^2+\zeta_3^2}\}.
\)
Since these are never real
and nonzero, $\mathfrak{so}(3)$ contains no hyperbolic elements, and
hence no homogeneous Hamiltonian system on $\mathfrak{so}(3)^*$ admits
a nontrivial conformal relative equilibrium, consistently with
Example~\ref{ex:rigid-body}.
\end{remark}

We now present an example where nontrivial conformal relative equilibria
do exist. The Lie algebra $\mathfrak{so}(2,1)$ is noncompact and, unlike
$\mathfrak{so}(3)$, contains hyperbolic elements; by
Proposition~\ref{prop:LP-criterion} this guarantees the existence of
nontrivial conformal relative equilibria for homogeneous Hamiltonian
systems on $\mathfrak{so}(2,1)^*$. The following example exhibits them
explicitly for a quadratic Hamiltonian.

\begin{figure}[ht]
\centering
\includegraphics[width=0.6\textwidth]{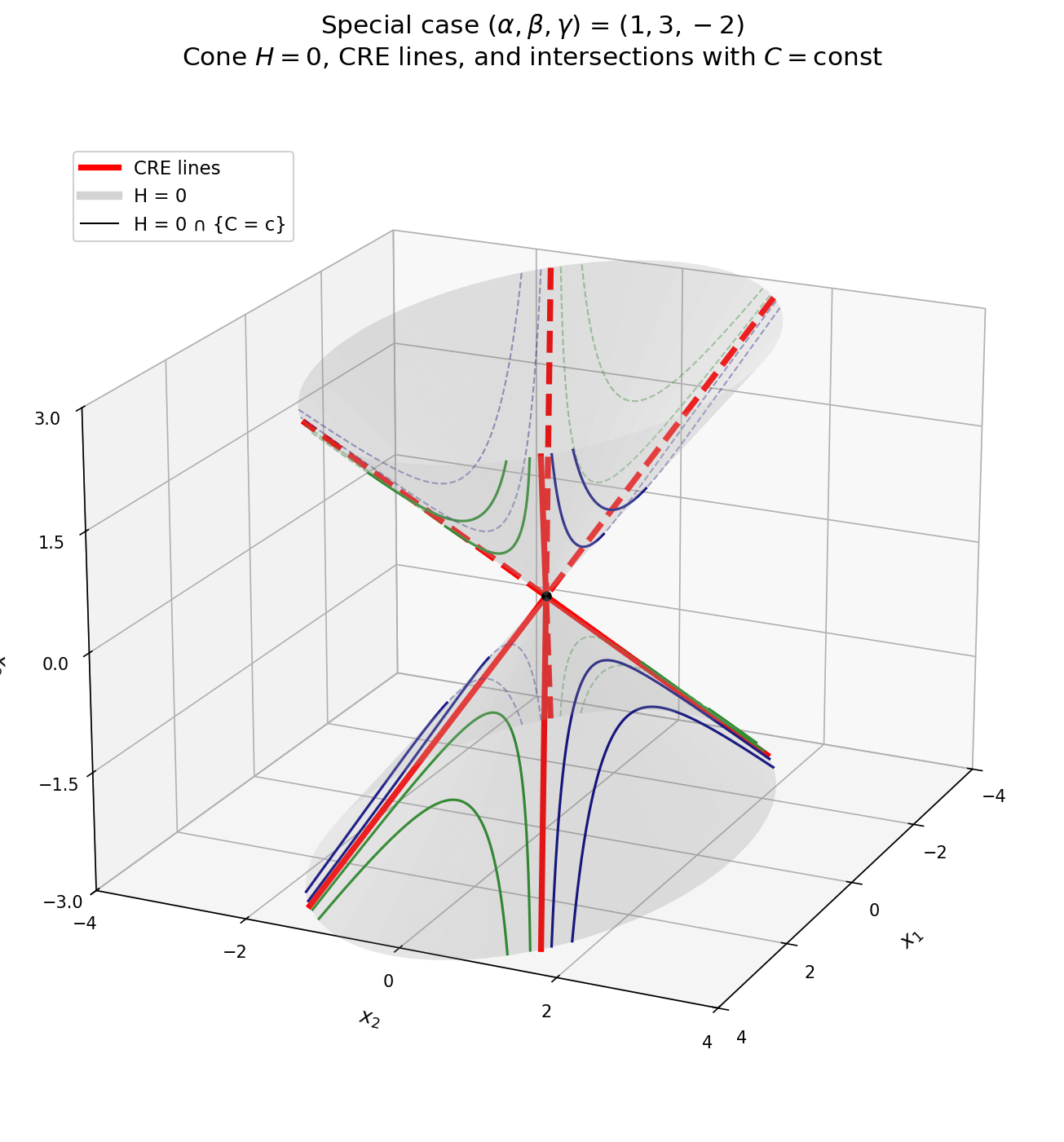}
\caption{Geometry of the nontrivial conformal relative equilibria for the free rigid body on $\mathfrak{so}(2,1)^*$ with Hamiltonian $H(x) = \frac{1}{2}(x_1^2 + 3x_2^2 - 2x_3^2)$. The translucent surface is the zero-energy cone $H=0$. The thick red lines are the four rays of nontrivial conformal relative equilibria, which lie exactly on the intersection $H=0 \cap C=0$, where $C(x) = -x_1^2 + x_2^2 + x_3^2$ is the Casimir function. The thin curves show the intersections $H=0 \cap \{C=c\}$ for the level sets $c = \pm 1, \pm 3$. Portions of the curves and rays lying on the rear of the cone from the viewer's perspective are drawn with dashed lines to indicate depth.
\label{fig:CRE-so(21)}}
\end{figure}
\begin{example}[Free rigid body on $\mathfrak{so}(2,1)^*$]\label{ex:so21}
Let $M=\mathbb{R}^3$ with coordinates $x=(x_1,x_2,x_3)$ and the
Lie--Poisson structure of $\mathfrak{so}(2,1)^*$, whose Poisson tensor is
\[
\Pi(x)=
\begin{pmatrix}
0 & x_3 & -x_2\\
-x_3 & 0 & -x_1\\
x_2 & x_1 & 0
\end{pmatrix}.
\]
The Casimir function is
\[
C(x)=-x_1^2+x_2^2+x_3^2.
\]
Let
\[
H(x)=\tfrac12(\alpha x_1^2+\beta x_2^2+\gamma x_3^2),
\]
where $\alpha,\beta,\gamma\in\mathbb{R}$ satisfy
\[
-\beta<\alpha<-\gamma.
\]
This implies
\[
\beta-\gamma>0,\qquad \alpha+\gamma<0,\qquad \alpha+\beta>0.
\]
Then
\[
dH(x)=(\alpha x_1,\beta x_2,\gamma x_3),
\]
and the Hamiltonian vector field $X_H=\Pi^\sharp(dH)$ is
\[
\dot{x}_1=(\beta-\gamma)x_2x_3,\qquad
\dot{x}_2=-(\alpha+\gamma)x_1x_3,\qquad
\dot{x}_3=(\alpha+\beta)x_1x_2.
\]

The Poisson manifold admits the scaling action $\Phi_s(x)=sx$ for $s>0$,
with infinitesimal generator
\[
D(x)=\frac{d}{dt}\bigg|_{t=0}\Phi_{e^t}(x)=x=x_i\,\partial_{x_i}.
\]
Since $\Pi$ is linear in $x$, we have $(\Phi_s)_*\Pi=s\Pi$, so $\Pi$ is
conformally invariant of degree $c=1$.
The Hamiltonian is homogeneous of degree $b=2$, so $X_H$ is conformally
invariant of degree $a=c-b=-1$.

A genuine conformal relative equilibrium is a point $x_e\neq 0$ satisfying
\[
X_H(x_e)=\xi x_e
\]
for some velocity $\xi\neq 0$.
Writing this out componentwise gives
\[
(\beta-\gamma)x_2x_3=\xi x_1,\qquad
-(\alpha+\gamma)x_1x_3=\xi x_2,\qquad
(\alpha+\beta)x_1x_2=\xi x_3.
\]
Notice that none of the coordinates can vanish: if, for instance, $x_1=0$, 
then the second and third equations would imply $\xi x_2 = 0$ and $\xi x_3 = 0$. 
Since $\xi \neq 0$, this would force $x_e=0$, a contradiction. 

Since $x_1, x_2, x_3 \neq 0$, we can eliminate $\xi$ by cross-multiplying the first two equations to obtain
\[
(\beta-\gamma)x_2^2x_3=-(\alpha+\gamma)x_1^2x_3.
\]
Dividing by $x_1^2x_3(\beta-\gamma)$ yields
\[
\frac{x_2^2}{x_1^2}=-\frac{\alpha+\gamma}{\beta-\gamma}.
\]
Similarly, cross-multiplying the first and third equations and dividing by
$x_1^2x_2(\beta-\gamma)$ gives
\[
\frac{x_3^2}{x_1^2}=\frac{\alpha+\beta}{\beta-\gamma}.
\]
By the coefficient inequalities above, both right-hand sides are strictly positive.
Set
\[
p=\sqrt{-\frac{\alpha+\gamma}{\beta-\gamma}},\qquad
q=\sqrt{\frac{\alpha+\beta}{\beta-\gamma}}.
\]
Then the solutions form four rays parameterized by $t \in \mathbb{R} \setminus \{0\}$:
\[
\mu_e^{\sigma_2,\sigma_3}(t)
=
t
\begin{pmatrix}
1\\
\sigma_2 p\\
\sigma_3 q
\end{pmatrix},
\qquad
\sigma_2,\sigma_3\in\{+1,-1\}.
\]

The corresponding velocity along each ray is recovered from the first equation:
\[
\xi=\frac{(\beta-\gamma)x_2x_3}{x_1}
=(\beta-\gamma)\,p\,q\,t\,\sigma_2\sigma_3.
\]
Substituting the expressions for $p$ and $q$, their product evaluates to
\[
p\,q = \frac{\sqrt{-(\alpha+\gamma)(\alpha+\beta)}}{\beta-\gamma}.
\]
Because $\beta-\gamma > 0$, multiplying this by $\beta-\gamma$ cleanly yields $\sqrt{-(\alpha+\gamma)(\alpha+\beta)}$. Therefore, the velocity is
\[
\xi = \sigma_2\sigma_3\sqrt{-(\alpha+\gamma)(\alpha+\beta)}\,t.
\]
Since $-(\alpha+\gamma)(\alpha+\beta)>0$, the quantity inside the square root is strictly positive, confirming $\xi\neq 0$.

Moreover, along these rays both the Hamiltonian and the Casimir identically vanish.
Indeed, by substituting $p^2$ and $q^2$, we obtain
\[
H(\mu_e^{\sigma_2,\sigma_3}(t))
=
\frac{t^2}{2}\bigl(\alpha+\beta p^2+\gamma q^2\bigr)=0,
\]
and
\[
C(\mu_e^{\sigma_2,\sigma_3}(t))
=
t^2(-1+p^2+q^2)=0.
\]
Thus the four rays lie in the intersection $H=0$ and $C=0$, that is, on
the light cone in $\mathfrak{so}(2,1)^*$.

\medskip\noindent\textbf{Special case.}
Taking
\[
\alpha=1,\qquad \beta=3,\qquad \gamma=-2,
\]
we satisfy the hypothesis $-\beta<\alpha<-\gamma$. We obtain
\[
p=\frac{1}{\sqrt{5}},\qquad q=\frac{2}{\sqrt{5}},\qquad
\sqrt{-(\alpha+\gamma)(\alpha+\beta)}= 2.
\]
Hence the four rays of conformal relative equilibria are
\[
\mu_e^{\sigma_2,\sigma_3}(t)
=
t
\begin{pmatrix}
1\\
\sigma_2 \tfrac{1}{\sqrt{5}}\\
\sigma_3 \tfrac{2}{\sqrt{5}}
\end{pmatrix},
\]
with corresponding velocities
\[
\xi= 2\sigma_2\sigma_3\,t.
\]
Specifically, $\xi=2t$ when $\sigma_2$ and $\sigma_3$ have the same sign, and $\xi=-2t$ when they have opposite signs.
The cone $H=0$, its intersection with the Casimir leaves $ C = const $, and the four families of conformal
relative equilibria are illustrated in Figure~\ref{fig:CRE-so(21)}.
\end{example}

\begin{example}[Free rigid body on $\mathfrak{so}(2,1)^*$]\label{ex:so21}
Let $M=\mathbb{R}^3$ with coordinates $x=(x_1,x_2,x_3)$ and the
Lie--Poisson structure of $\mathfrak{so}(2,1)^*$, whose Poisson tensor is
\[
\Pi(x)=
\begin{pmatrix}
0 & x_3 & -x_2\\
-x_3 & 0 & -x_1\\
x_2 & x_1 & 0
\end{pmatrix}.
\]
The Casimir function is
\[
C(x)=-x_1^2+x_2^2+x_3^2.
\]

Consider the quadratic Hamiltonian
\[
H(x)=\tfrac12(\alpha x_1^2+\beta x_2^2+\gamma x_3^2),
\]
where $\alpha,\beta,\gamma\in\mathbb{R}$ satisfy
\[
-\beta<\alpha<-\gamma.
\]
Equivalently,
\[
\beta-\gamma>0,\qquad \alpha+\beta>0,\qquad \alpha+\gamma<0.
\]

Since
\[
dH(x)=(\alpha x_1,\beta x_2,\gamma x_3),
\]
the Hamiltonian vector field $X_H=\Pi^\sharp(dH)$ is given by
\[
\dot{x}_1=(\beta-\gamma)x_2x_3,\qquad
\dot{x}_2=-(\alpha+\gamma)x_1x_3,\qquad
\dot{x}_3=(\alpha+\beta)x_1x_2.
\]

The manifold $M$ admits the scaling action $\Phi_s(x)=sx$ for $s>0$.
Its infinitesimal generator is the Euler vector field
\[
D=x_1\partial_{x_1}+x_2\partial_{x_2}+x_3\partial_{x_3},
\qquad\text{so that}\qquad D(x)=x.
\]
Since $\Pi$ is linear in $x$, we have $(\Phi_s)_*\Pi=s\Pi$, so the Poisson
tensor is conformally invariant of degree $c=1$. Since $H$ is homogeneous of
degree $b=2$, Proposition~\ref{prop:conformal-XH} implies that $X_H$ is
conformally invariant of degree \(a=c-b=-1.\)

A nontrivial conformal relative equilibrium is a point $x_e\neq 0$ such that
\[
X_H(x_e)=\xi x_e
\]
for some $\xi\neq 0$. Writing this condition componentwise gives

\begin{subequations}\label{eq:so21-cre-system}
\begin{align}
(\beta-\gamma)x_2x_3 &= \xi x_1, \label{eq:so21-cre-system-a}\\
-(\alpha+\gamma)x_1x_3 &= \xi x_2, \label{eq:so21-cre-system-b}\\
(\alpha+\beta)x_1x_2 &= \xi x_3. \label{eq:so21-cre-system-c}
\end{align}
\end{subequations}

None of the coordinates of $x_e$ can vanish. Indeed, if $x_1=0$, then equations \eqref{eq:so21-cre-system-b} and \eqref{eq:so21-cre-system-c}
imply $\xi x_2=0$ and $\xi x_3=0$, hence $x_2=x_3=0$ because $\xi\neq 0$,
contradicting $x_e\neq 0$. The same argument applies to $x_2$ and $x_3$.
Therefore
\[
x_1x_2x_3\neq 0.
\]

We may now eliminate $\xi$. Dividing  equation \eqref{eq:so21-cre-system-a}   by $x_1$ and equation  \eqref{eq:so21-cre-system-b} 
by $x_2$, then equating the resulting expressions for $\xi$, we obtain
\[
(\beta-\gamma)\frac{x_2x_3}{x_1}
=-(\alpha+\gamma)\frac{x_1x_3}{x_2},
\]
and hence
\[
\frac{x_2^2}{x_1^2}
=-\frac{\alpha+\gamma}{\beta-\gamma}.
\]
Similarly, comparing  equation \eqref{eq:so21-cre-system-a}  and equation  \eqref{eq:so21-cre-system-c} 
gives
\[
\frac{x_3^2}{x_1^2}
=\frac{\alpha+\beta}{\beta-\gamma}.
\]
Because $\beta-\gamma>0$, $\alpha+\beta>0$, and $\alpha+\gamma<0$, both
right-hand sides are strictly positive.

Set
\[
p=\sqrt{-\frac{\alpha+\gamma}{\beta-\gamma}},
\qquad
q=\sqrt{\frac{\alpha+\beta}{\beta-\gamma}}.
\]
Then the nontrivial conformal relative equilibria are the four rays
\[
\mu_e^{\sigma_2,\sigma_3}(t)
=
t
\begin{pmatrix}
1\\
\sigma_2 p\\
\sigma_3 q
\end{pmatrix},
\qquad
t>0,\qquad
\sigma_2,\sigma_3\in\{+1,-1\}.
\]

The corresponding velocity is obtained from equation  \eqref{eq:so21-cre-system-a}:
\[
\xi
=
\frac{(\beta-\gamma)x_2x_3}{x_1}
=
(\beta-\gamma)\sigma_2\sigma_3\,pqt.
\]
Since
\[
pq
=
\frac{\sqrt{-(\alpha+\gamma)(\alpha+\beta)}}{\beta-\gamma},
\]
it follows that
\[
\xi
=
\sigma_2\sigma_3\sqrt{-(\alpha+\gamma)(\alpha+\beta)}\,t.
\]
Because $\alpha+\gamma<0$ and $\alpha+\beta>0$, the quantity
$-(\alpha+\gamma)(\alpha+\beta)$ is strictly positive, so $\xi\neq 0$.
Hence these are genuine nontrivial conformal relative equilibria.

Moreover, along each of these rays both the Hamiltonian and the Casimir vanish.
Indeed,
\[
H\bigl(\mu_e^{\sigma_2,\sigma_3}(t)\bigr)
=
\frac{t^2}{2}\bigl(\alpha+\beta p^2+\gamma q^2\bigr)=0,
\]
and
\[
C\bigl(\mu_e^{\sigma_2,\sigma_3}(t)\bigr)
=
t^2(-1+p^2+q^2)=0.
\]
Thus the four rays lie in the intersection $H=0$ and $C=0$, that is, on the
light cone in $\mathfrak{so}(2,1)^*$.

\medskip\noindent\textbf{Special case.}
Take
\[
\alpha=1,\qquad \beta=3,\qquad \gamma=-2.
\]
Then $-\beta<\alpha<-\gamma$ holds, and
\[
p=\frac{1}{\sqrt{5}},\qquad
q=\frac{2}{\sqrt{5}},\qquad
\sqrt{-(\alpha+\gamma)(\alpha+\beta)}=2.
\]
Therefore the four rays of nontrivial conformal relative equilibria are
\[
\mu_e^{\sigma_2,\sigma_3}(t)
=
t
\begin{pmatrix}
1\\
\sigma_2 \tfrac{1}{\sqrt{5}}\\
\sigma_3 \tfrac{2}{\sqrt{5}}
\end{pmatrix},
\qquad t>0,
\]
with corresponding velocities
\[
\xi=2\sigma_2\sigma_3\,t.
\]
Thus $\xi=2t$ when $\sigma_2$ and $\sigma_3$ have the same sign, and
$\xi=-2t$ when they have opposite signs. The cone $H=0$, its intersections
with the Casimir leaves $C=\mathrm{const}$, and the four families of conformal
relative equilibria are shown in Figure~\ref{fig:CRE-so(21)}.
\end{example}
\section{The Three-Dimensional Case: Bianchi Classification}
The goal of this section is to give a complete list of real
three-dimensional Lie--Poisson structures that admit a homogeneous
Hamiltonian with nontrivial conformal relative equilibria.
By Proposition~\ref{prop:LP-criterion}, this is equivalent to
determining which three-dimensional Lie algebras contain a hyperbolic
element, i.e., an element whose adjoint map has a nonzero real
eigenvalue. Equivalently, one may ask for which algebras some coadjoint
operator has a nonzero eigenvector with nonzero real eigenvalue.
Since all real three-dimensional Lie algebras are classified by the
Bianchi scheme \cite{bianchi1897sugli,bianchi1928lezioni,ellis1969class},
this yields a definitive answer, which is stated and proved as
Proposition~\ref{prop:bianchi-cre} below.
In Subsection~\ref{subsec:structure-constants} we recall the
Ellis--MacCallum form of the structure constants and explain how they
determine the Lie--Poisson tensor for each Bianchi type.
In Subsection~\ref{subsec:cre-bianchi} we apply
Proposition~\ref{prop:LP-criterion} to each type in turn and identify
precisely which algebras admit nontrivial conformal relative equilibria.
\subsection{Structure constants and Lie--Poisson brackets}\label{subsec:structure-constants} 
Every real three-dimensional Lie algebra is isomorphic to  one
of the Bianchi types, including the one parameter families $ VI_h$ and $VII_h$  \cite{ellis1969class,ryan2015homogeneous}.
Following the scheme of Ellis and MacCallum \cite{ellis1969class}
(see also \cite{ryan2015homogeneous,yoshida2017rattleback}, the
structure constants of any three-dimensional Lie algebra can be written
in the form
\begin{equation}\label{eq:structure-constants}
  C^i{}_{jk}
  = \varepsilon_{jks}\,m^{si} + \delta^i_k\,a_j - \delta^i_j\,a_k,
\end{equation}
where $\varepsilon_{jks}$ denotes the Levi--Civita symbol: it is equal to $1$
for even permutations of $(1,2,3)$, equal to $-1$ for odd permutations, and
equal to $0$ whenever two indices are equal;
here $m=(m^{ij})$ is a $3\times3$ symmetric
matrix, and $a=(a_i)$ is a covector.
The nine Bianchi types  and families correspond to different choices of $m$ and $a$,
as listed in Table~\ref{tab:bianchi-ma}.
The algebras split into two classes:
\begin{itemize}
  \item \textbf{Class A} ($a_i=0$): types I, II, VI$_{-1}$, VII$_0$,
        VIII, IX.
  \item \textbf{Class B} ($a_i\neq 0$): types III, IV, V,
        VI$_{h\neq -1}$, VII$_{h\neq 0}$.
\end{itemize}

\begin{table}[htbp]
\centering
\renewcommand{\arraystretch}{1.35}
\begin{tabular}{cclll}
\hline
Class & Type & Matrix $m$ & Vector $a=(a_1,a_2,a_3)$ & Notes \\
\hline
A & I        & $0$                              & $(0,0,0)$ & abelian; $\mathbb{R}^3$ \\
A & II       & $\mathrm{diag}(1,0,0)$          & $(0,0,0)$ & nilpotent; Heisenberg \\
A & VI$_{-1}$& $-\alpha$                         & $(0,0,0)$ & solvable\\
A & VII$_0$  & $\mathrm{diag}(-1,-1,0)$        & $(0,0,0)$ & solvable; $\mathfrak{e}(2)$ \\
A & VIII     & $\mathrm{diag}(-1,1,1)$         & $(0,0,0)$ & simple; $\mathfrak{so}(2,1)\cong \mathfrak{sl}(2,\mathbb{R})$ \\
A & IX       & $\mathrm{diag}(1,1,1)$          & $(0,0,0)$ & simple; $\mathfrak{so}(3)$ \\
\hline
B & III      & $-\tfrac{1}{2}\alpha$            & $\bigl(0,0,-\tfrac{1}{2}\bigr)$ & solvable\\
B & IV       & $\mathrm{diag}(1,0,0)$          & $(0,0,-1)$ & solvable \\
B & V        & $0$                              & $(0,0,-1)$ & solvable \\
B & VI$_{h\neq-1}$ & $\tfrac{1}{2}(h-1)\alpha$ & $\bigl(0,0,-\tfrac{1}{2}(h+1)\bigr)$ & solvable; non-unimodular family \\
B & VII$_{h\neq 0}$& $\mathrm{diag}(-1,-1,0)+\tfrac{h}{2}\alpha$ & $\bigl(0,0,-\tfrac{h}{2}\bigr)$ & solvable; non-unimodular family \\
\hline
\end{tabular}
\caption{Bianchi classification of real three-dimensional Lie algebras
after Ellis--MacCallum \cite{ellis1969class,ryan2015homogeneous}.
Here $\alpha=\bigl(\begin{smallmatrix}0&1&0\\1&0&0\\0&0&0\end{smallmatrix}\bigr)$.
The vector $a$ is written explicitly to avoid notation ambiguities
(e.g., $a_i=-\delta^i_3$) commonly found in the literature.
The table follows the Ellis--MacCallum convention, in which some labels differ from other standard parametrizations of the Bianchi classes.}
\label{tab:bianchi-ma}
\end{table}

Given the structure constants \eqref{eq:structure-constants}, the
associated Lie--Poisson bracket on $\mathfrak{g}^*\cong\mathbb{R}^3$
is
\begin{equation}\label{eq:LP-bracket}
  \{f,g\}(x)
  = \bigl\langle x,\,[\mathrm{d}f,\,\mathrm{d}g]\bigr\rangle
  = \sum_{i,j} \Pi_{ij}(x)\,\frac{\partial f}{\partial x_i}\,
    \frac{\partial g}{\partial x_j},
\end{equation}
where the Poisson tensor $\Pi$ is the antisymmetric matrix with entries
\begin{equation}\label{eq:poisson-matrix}
  \Pi_{ij}(x) = C^k{}_{ij}\,x_k.
\end{equation}
Note that $\Pi$ is \emph{linear} in $x$, a characteristic feature of Lie--Poisson structures.
Because sign conventions for $\mathfrak{so}(2,1)$ vary in
the literature, we compute its structure constants directly from
\eqref{eq:structure-constants}; in this case,
$m=\mathrm{diag}(-1,1,1)$ and $a=0$.

Substituting into \eqref{eq:structure-constants} gives

\[
C^1{}_{23}=-1,\qquad C^2{}_{31}=1,\qquad C^3{}_{12}=1,\qquad
C^1{}_{32}=1,\qquad C^2{}_{13}=-1,\qquad C^3{}_{21}=-1,
\]
and all other structure constants zero.
Hence
\[
\Pi(x)=
\begin{pmatrix}
0 & x_3 & -x_2\\
-x_3 & 0 & -x_1\\
x_2 & x_1 & 0
\end{pmatrix},
\]
which is the standard Lie--Poisson tensor of $\mathfrak{so}(2,1)$,
with Casimir $C(x) = -x_1^2 + x_2^2 +x_3^2$.

Carrying out the same computation for each Bianchi type yields the
Poisson tensors and Casimir functions listed in
Table~\ref{tab:bianchi-JP}.\footnote{While we follow the general 
classification scheme outlined in \cite{yoshida2017rattleback}, we 
have corrected some sign inconsistencies in their tables (specifically for 
types  V and VIII) by recomputing the Poisson tensors 
directly from the Ellis--MacCallum structure constants.}


\begin{table}[ht]
\centering
\small
\renewcommand{\arraystretch}{1.0}
\setlength{\tabcolsep}{4pt}
\begin{tabular}{cllcc}
\hline
\noalign{\smallskip}
Type & Poisson tensor $\Pi$ & Casimir $C$ & Class & CRE? \\
\noalign{\smallskip}
\hline
\noalign{\smallskip}
I &
$\left(\begin{smallmatrix}0&0&0\\0&0&0\\0&0&0\end{smallmatrix}\right)$ &
$x_1,x_2,x_3$ & A & No \\[2pt]
II &
$\left(\begin{smallmatrix}0&0&0\\0&0&x_1\\0&{-x_1}&0\end{smallmatrix}\right)$ &
$x_1$ & A & No \\[2pt]
VI$_{-1}$ &
$\left(\begin{smallmatrix}0&0&x_1\\0&0&-x_2\\-x_1&{x_2}&0\end{smallmatrix}\right)$ &
$x_1 x_2$ & A & Yes \\[2pt]
VII$_0$ &
$\left(\begin{smallmatrix}0&0&x_2\\0&0&{-x_1}\\{-x_2}&x_1&0\end{smallmatrix}\right)$ &
$x_1^2+x_2^2$ & A & No \\[2pt]
VIII &
$\left(\begin{smallmatrix}0&x_3&{-x_2}\\{-x_3}&0&{-x_1}\\x_2&x_1&0\end{smallmatrix}\right)$ &
$-x_1^2+x_2^2+x_3^2$ & A & Yes \\[2pt]
IX &
$\left(\begin{smallmatrix}0&x_3&{-x_2}\\{-x_3}&0&x_1\\x_2&{-x_1}&0\end{smallmatrix}\right)$ &
$x_1^2+x_2^2+x_3^2$ & A & No \\[2pt]
\noalign{\smallskip}
\hline
\noalign{\smallskip}
III &
$\left(\begin{smallmatrix}0&0&x_1\\0&0&0\\-x_1&0&0\end{smallmatrix}\right)$ &
$x_2$ & B & Yes \\[2pt]
IV &
$\left(\begin{smallmatrix}0&0&{x_1}\\0&0&{x_1+x_2}\\-x_1&{-(x_1+x_2)}&0\end{smallmatrix}\right)$ &
$x_2 x_1^{-1}-\log x_1$ & B & Yes \\[2pt]
V &
$\left(\begin{smallmatrix}0&0&{-x_1}\\0&0&{-x_2}\\x_1&x_2&0\end{smallmatrix}\right)$ &
$x_2 x_1^{-1}$ & B & Yes \\[2pt]
VI$_{h \neq -1}$ &
$\left(\begin{smallmatrix}0&0&x_1\\0&0&hx_2\\{-x_1}&{-hx_2}&0\end{smallmatrix}\right)$ &
$x_2 x_1^{-h}$ & B & Yes \\[2pt]
VII$_{h\neq 0}$ $(|h|<2)$ &
$\left(\begin{smallmatrix}0&0&x_2\\0&0&{-x_1+hx_2}\\{-x_2}&x_1{-}hx_2&0\end{smallmatrix}\right)$ &
$C_{VII_h}(x_1,x_2)$ & B & No \\
VII$_{h\neq 0}$ $(|h|\ge 2)$ &
$\left(\begin{smallmatrix}0&0&x_2\\0&0&{-x_1+hx_2}\\{-x_2}&x_1{-}hx_2&0\end{smallmatrix}\right)$ &
$C_{VII_h}(x_1,x_2)$ & B & Yes \\
\noalign{\smallskip}
\hline
\end{tabular}
\caption{Lie--Poisson tensors, Casimir functions, and CRE existence for
the Bianchi types. For type VII$_{h\neq 0}$, the table is split by the CRE
regimes $0<|h|<2$ and $|h|\ge 2$; the Casimir $C_{VII_h}$ has separate
expressions for $h<-2$, $h=2$, $-2<h<2$,  $h=-2$, and $ h>2$; see
\cite{yoshida2017rattleback}.}
\label{tab:bianchi-JP}
\end{table}

\subsection{Conformal relative equilibria}\label{subsec:cre-bianchi}

We now apply Proposition~\ref{prop:LP-criterion} to determine which
Bianchi types admit a homogeneous Hamiltonian with nontrivial
conformal relative equilibria on $\mathfrak g^*$.
By Proposition~\ref{prop:LP-criterion}, this is equivalent to the
existence of $\zeta\in\mathfrak g$ such that the coadjoint operator
\(\mathrm{ad}^*_\zeta:\mathfrak g^*\to\mathfrak g^*\) has a nonzero real
eigenvalue; equivalently, $\mathfrak g$ contains a hyperbolic element.

Following our conventions, type VI$_{-1}$ and type VII$_0$ are listed
separately, while VI$_h$ denotes the family with parameter
\(h\neq 0,-1\) and VII$_h$ denotes the family with parameter \(h\neq 0\).

For the Lie--Poisson structure on $\mathfrak g^*$, the Poisson tensor
defines at each point \(x\in\mathfrak g^*\) a bundle map
\(\Pi^\sharp_x:T_x^*\mathfrak g^*\to T_x\mathfrak g^*\). Under the canonical
identifications \(T_x^*\mathfrak g^*\cong\mathfrak g\) and
\(T_x\mathfrak g^*\cong\mathfrak g^*\), this map is precisely the
coadjoint action, \(\Pi^\sharp_x(\zeta)=\mathrm{ad}^*_\zeta x\).
Thus, for each fixed \(\zeta\in\mathfrak g\), we obtain a linear map
\(A_\zeta:\mathfrak g^*\to\mathfrak g^*\) given by
\[
A_\zeta(x):=\mathrm{ad}^*_\zeta x=\Pi(x)\zeta.
\]
The following proposition determines, for each Bianchi type, whether
its Lie--Poisson structure admits a homogeneous Hamiltonian with
nontrivial conformal relative equilibria.
\begin{proposition}[Bianchi types admitting nontrivial conformal relative equilibria]\label{prop:bianchi-cre}
A real three-dimensional Lie algebra $\mathfrak g$ endowed with its
Lie--Poisson structure admits a homogeneous Hamiltonian with a nontrivial
conformal relative equilibrium on $\mathfrak g^*$ if and only if
$\mathfrak g$ is of Bianchi type III, IV, V, VI$_{-1}$,
VI$_h$ $(h\neq 0,-1)$, VII$_h$ $(|h|\ge 2)$, or VIII.

Equivalently, no such Hamiltonian exists for types I, II, VII$_0$, IX,
and VII$_h$ with $0<|h|<2$.
\end{proposition}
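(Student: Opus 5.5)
By Proposition~\ref{prop:LP-criterion}, a three-dimensional Lie algebra admits a homogeneous Hamiltonian with a nontrivial conformal relative equilibrium if and only if it contains a hyperbolic element. So I would go through the Bianchi list and decide, for each type, whether some $\zeta\in\mathfrak g$ makes $\mathrm{ad}_\zeta$ have a nonzero real eigenvalue. Since $\mathrm{ad}^*_\zeta$ and $\mathrm{ad}_\zeta$ are transposes and share a spectrum, I can read everything off the matrices $A_\zeta(x)=\Pi(x)\zeta$ built from Table~\ref{tab:bianchi-JP}. Overall sign conventions do not matter, because they only send the spectrum $\sigma$ to $-\sigma$.

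\textbf{Types I, II, IX, VIII.} For types I and II the algebra is nilpotent, so every $\mathrm{ad}_\zeta$ is nilpotent and has only the eigenvalue $0$. For type IX the skew-symmetric computation in the remark after Proposition~\ref{prop:LP-criterion} gives the spectrum $\{0,\pm i|\zeta|\}$, so there is no hyperbolic element. For type VIII the same computation with the tensor of $\mathfrak{so}(2,1)$ gives the characteristic polynomial $\lambda(\lambda^2-Q(\zeta))$, where $Q$ is an indefinite quadratic form (Lorentzian, up to sign). Choosing $\zeta$ with $Q(\zeta)>0$ produces eigenvalues $\pm\sqrt{Q(\zeta)}\neq 0$. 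Example~\ref{ex:so21} already exhibits such equilibria explicitly.

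\textbf{Remaining solvable types.} These are III, IV, V, VI$_{-1}$, VI$_h$, VII$_0$ and VII$_h$. In each of them, Table~\ref{tab:bianchi-JP} shows that $\Pi_{12}=0$, so $\mathfrak n=\mathrm{span}(e_1,e_2)$ is an abelian ideal, and $\mathrm{ad}_{e_3}$ preserves $\mathfrak n$ through a $2\times2$ matrix $N$. Writing $\zeta=\zeta_3e_3+u$ with $u\in\mathfrak n$, the operator $\mathrm{ad}_\zeta$ is block triangular with respect to $\mathfrak n\subset\mathfrak g$: it acts as $\zeta_3N$ on $\mathfrak n$ and as $0$ on the quotient $\mathfrak g/\mathfrak n$. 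Hence its spectrum is $\{0\}\cup\zeta_3\,\mathrm{spec}(N)$, and the algebra has a hyperbolic element if and only if $N$ has a nonzero real eigenvalue. The matrix $N$ can be read from the third column of the tables:
\begin{itemize}
\item type III: $N\sim\mathrm{diag}(1,0)$, eigenvalues $1,0$;
\item type IV: $N$ is a Jordan block with eigenvalue $1$;
\item type V: $N=I$;
\item type VI$_h$, including $h=-1$: eigenvalues $1$ and $h$;
\item type VII$_0$: a rotation, with eigenvalues $\pm i$;
\item type VII$_h$: characteristic polynomial $\lambda^2-h\lambda+1$.
\end{itemize}
For VII$_h$ the roots are real exactly when $h^2\ge4$, and they are then nonzero because their product is $1$. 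This yields the claimed list: yes for III, IV, V, VI$_{-1}$, VI$_h$, VII$_{|h|\ge2}$ and VIII; no for I, II, VII$_0$, VII$_{0<|h|<2}$ and IX.

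The main obstacle I expect is bookkeeping rather than conceptual. I need to extract $N$ correctly from the Ellis--MacCallum tensors, whose index and sign conventions differ across the literature, and to check that the block-triangular reduction is legitimate, namely that $\mathfrak n$ really is an ideal in each case. To guard against errors I would verify each $N$ directly by computing $\Pi(x)e_3$ as a linear map in $(x_1,x_2)$. For the VII$_h$ boundary I would also confirm that at $|h|=2$ the double root $\pm1$ is real and nonzero, so this case is included.
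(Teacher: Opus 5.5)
Your proposal is correct and takes essentially the same route as the paper: you reduce via Proposition~\ref{prop:LP-criterion} to a type-by-type check of whether some $A_\zeta=\Pi(x)\zeta$ has a nonzero real eigenvalue, with $\zeta=e_3$ (or $e_2$ for VIII) witnessing the positive cases. Your block-triangular reduction along $\mathrm{span}(e_1,e_2)$ is a structural repackaging of the paper's explicit characteristic polynomials, for example $\lambda(\lambda^2-h\zeta_3\lambda+\zeta_3^2)$ for VII$_h$, and your nilpotency argument replaces the direct computation for I and II; note that $\mathrm{span}(e_1,e_2)$ is an ideal because the third columns in Table~\ref{tab:bianchi-JP} involve only $x_1,x_2$, not merely because $\Pi_{12}=0$.
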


\begin{proof}
We examine each Bianchi type in turn, using the Poisson tensors listed in
Table~\ref{tab:bianchi-JP}.

\smallskip
\noindent
\textbf{Type I.}
Here \(\Pi(x)=0\), so \(A_\zeta=0\) for every \(\zeta\in\mathfrak g\).
Thus all eigenvalues are zero, and no nontrivial conformal relative
equilibrium exists.

\smallskip
\noindent
\textbf{Type II.}
From
\[
\Pi(x)=
\begin{pmatrix}
0&0&0\\
0&0&x_1\\
0&-x_1&0
\end{pmatrix}
\]
we obtain, for \(\zeta=(\zeta_1,\zeta_2,\zeta_3)^T\),
\[
A_\zeta x=\Pi(x)\zeta=
\begin{pmatrix}
0\\
\zeta_3 x_1\\
-\zeta_2 x_1
\end{pmatrix},
\qquad
A_\zeta=
\begin{pmatrix}
0&0&0\\
\zeta_3&0&0\\
-\zeta_2&0&0
\end{pmatrix}.
\]
This matrix is nilpotent, so its only eigenvalue is \(0\). Hence type II
admits no nontrivial conformal relative equilibrium.

\smallskip
\noindent
\textbf{Type III.}
Choose \(\zeta=e_3\). Then
\[
A _{ e _3 } x = \Pi(x)e_3=
\begin{pmatrix}
x_1\\
0\\
0
\end{pmatrix},
\qquad
A_{e_3}=
\begin{pmatrix}
1&0&0\\
0&0&0\\
0&0&0
\end{pmatrix},
\]
whose eigenvalues are \(1,0,0\). Hence type III admits nontrivial
conformal relative equilibria.

\smallskip
\noindent
\textbf{Type IV.}
Choose \(\zeta=e_3\). Then
\[
A _{ e _3 } x = 
\Pi(x)e_3=
\begin{pmatrix}
x_1\\
x_1+x_2\\
0
\end{pmatrix},
\qquad
A_{e_3}=
\begin{pmatrix}
1&0&0\\
1&1&0\\
0&0&0
\end{pmatrix}.
\]
Its eigenvalues are \(1,1,0\), so type IV admits nontrivial conformal
relative equilibria.

\smallskip
\noindent
\textbf{Type V.}
Choose \(\zeta=e_3\). Then
\[
A _{ e _3 } x = 
\Pi(x)e_3=
\begin{pmatrix}
-x_1\\
-x_2\\
0
\end{pmatrix},
\qquad
A_{e_3}=
\begin{pmatrix}
-1&0&0\\
0&-1&0\\
0&0&0
\end{pmatrix},
\]
whose eigenvalues are \(-1,-1,0\). Thus type V admits nontrivial
conformal relative equilibria.

\smallskip
\noindent
\textbf{Type VI$_{-1}$.}
Choose \(\zeta=e_3\). Then
\[
A _{ e _3 } x = 
\Pi(x)e_3=
\begin{pmatrix}
x_1\\
-x_2\\
0
\end{pmatrix},
\qquad
A_{e_3}=
\begin{pmatrix}
1&0&0\\
0&-1&0\\
0&0&0
\end{pmatrix}.
\]
Its eigenvalues are \(1,-1,0\). Hence type VI$_{-1}$ admits nontrivial
conformal relative equilibria.

\smallskip
\noindent
\textbf{Type VI$_h$ \((h\neq 0,-1)\).}
Choose \(\zeta=e_3\). Then
\[
A _{ e _3 } x = 
\Pi(x)e_3=
\begin{pmatrix}
x_1\\
h x_2\\
0
\end{pmatrix},
\qquad
A_{e_3}=
\begin{pmatrix}
1&0&0\\
0&h&0\\
0&0&0
\end{pmatrix}.
\]
Its eigenvalues are \(1,h,0\). Since \(h\neq 0\), this gives a nonzero
real eigenvalue, so every type VI$_h$ with \(h\neq 0,-1\) admits
nontrivial conformal relative equilibria.

\smallskip
\noindent
\textbf{Type VII$_0$.} Let \(\zeta=(\zeta_1,\zeta_2,\zeta_3)^T\in\mathfrak g\). Then from
\[
\Pi(x)=
\begin{pmatrix}
0&0&x_2\\
0&0&-x_1\\
-x_2&x_1&0
\end{pmatrix}
\]
we get
\[
A_\zeta x= \Pi(x)\zeta=
\begin{pmatrix}
\zeta_3 x_2\\
-\zeta_3 x_1\\
\zeta_2 x_1-\zeta_1 x_2
\end{pmatrix},
\qquad
A_\zeta=
\begin{pmatrix}
0&\zeta_3&0\\
-\zeta_3&0&0\\
\zeta_2&-\zeta_1&0
\end{pmatrix}.
\]
Its characteristic polynomial is
\[
\chi_{A_\zeta}(\lambda)=\lambda(\lambda^2+\zeta_3^2),
\]
so the eigenvalues are \(0,\pm i\zeta_3\). In particular, there is no
nonzero real eigenvalue, and therefore no nontrivial conformal relative
equilibrium.

\smallskip
\noindent
\textbf{Type VII$_h$ \((h\neq 0)\).}
Let \(\zeta=(\zeta_1,\zeta_2,\zeta_3)^T\in\mathfrak g\). Then
\[
A_\zeta x=\Pi(x)\zeta=
\begin{pmatrix}
\zeta_3 x_2\\
-\zeta_3 x_1+h\zeta_3 x_2\\
\zeta_2 x_1-(\zeta_1+h\zeta_2)x_2
\end{pmatrix},
\]
so
\[
A_\zeta=
\begin{pmatrix}
0&\zeta_3&0\\
-\zeta_3&h\zeta_3&0\\
\zeta_2&-(\zeta_1+h\zeta_2)&0
\end{pmatrix}.
\]
Its characteristic polynomial is
\[
\chi_{A_\zeta}(\lambda)
=\lambda(\lambda^2-h\zeta_3\lambda+\zeta_3^2).
\]
If \(\zeta_3=0\), then \(\chi_{A_\zeta}(\lambda)=\lambda^3\), so all
eigenvalues are zero. If \(\zeta_3\neq 0\), the two nonzero eigenvalues are
\[
\lambda_\pm=
\frac{h\zeta_3\pm\sqrt{\zeta_3^2(h^2-4)}}{2}.
\]
Hence \(A_\zeta\) has real nonzero eigenvalues if and only if
\(h^2-4\ge 0\). Therefore \(A_\zeta\) has a nonzero real eigenvalue for
some \(\zeta\in\mathfrak g\) if and only if \(|h|\ge 2\).

\smallskip
\noindent
\textbf{Type VIII.}
Choose \(\zeta=e_2\). Then
\[
A _{ e _2} x = 
\Pi(x)e_2=
\begin{pmatrix}
x_3\\
0\\
x_1
\end{pmatrix},
\qquad
A_{e_2}=
\begin{pmatrix}
0&0&1\\
0&0&0\\
1&0&0
\end{pmatrix}.
\]
Its eigenvalues are \(1,-1,0\). Therefore type VIII admits nontrivial
conformal relative equilibria.

\smallskip
\noindent
\textbf{Type IX.} Let \(\zeta=(\zeta_1,\zeta_2,\zeta_3)^T\in\mathfrak g\). Then from
\[
\Pi(x)=
\begin{pmatrix}
0&x_3&-x_2\\
-x_3&0&x_1\\
x_2&-x_1&0
\end{pmatrix}
\]
we obtain
\[
A_\zeta x=\Pi(x)\zeta=
\begin{pmatrix}
-x_2\zeta_3+x_3\zeta_2\\
x_1\zeta_3-x_3\zeta_2\\
-x_1\zeta_2+x_2\zeta_1
\end{pmatrix},
\qquad
A_\zeta=
\begin{pmatrix}
0&-\zeta_3&\zeta_2\\
\zeta_3&0&-\zeta_1\\
-\zeta_2&\zeta_1&0
\end{pmatrix}.
\]
This matrix is skew-symmetric, hence
\[
\chi_{A_\zeta}(\lambda)= 
\lambda\bigl(\lambda^2+\zeta_1^2+\zeta_2^2+\zeta_3^2\bigr).
\]
Thus the spectrum consists of \(0\) and a purely imaginary pair, so type
IX admits no nontrivial conformal relative equilibrium.

This proves the proposition.
\end{proof}
\appendix
\section{Appendix}
\begin{lemma}[Naturality of the Poisson Bundle Map]\label{lemm:naturality-poisson}
Let $\Phi: M \to N$ be a diffeomorphism and $\Pi \in \mathfrak{X}^2(M)$. Then
\[
\bigl(\Phi_* \Pi\bigr)^\sharp = \Phi_* \circ \Pi^\sharp \circ \Phi^*.
\]
\end{lemma}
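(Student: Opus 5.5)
Since $\Phi$ is a diffeomorphism, both sides of the identity are bundle maps $T^*N \to TN$ covering the identity on $N$. The plan is to check the equality pointwise on an arbitrary covector, using the standard definitions of pushforward and pullback. Fix $y\in N$ and set $x=\Phi^{-1}(y)$. For covectors on $N$ we read $\Phi^*$ as the fibrewise transpose $(T_x\Phi)^*:T_y^*N\to T_x^*M$. For vectors we read $\Phi_*$ as $T_x\Phi:T_xM\to T_yN$. The pushforward of the bivector is defined by
\[
(\Phi_*\Pi)_y(\alpha,\beta)=\Pi_x\bigl((T_x\Phi)^*\alpha,(T_x\Phi)^*\beta\bigr),\qquad \alpha,\beta\in T_y^*N.
\]
This is exactly the convention already used in the proof of Proposition~\ref{prop:LP-exact}.

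The key step is to pair both sides with an arbitrary test covector $\beta\in T_y^*N$ and compare. For the left-hand side, the definition $\Pi^\sharp(\alpha)=\Pi(\alpha,\cdot)$ gives
\[
\langle\beta,(\Phi_*\Pi)^\sharp_y(\alpha)\rangle=(\Phi_*\Pi)_y(\alpha,\beta)=\Pi_x\bigl((T_x\Phi)^*\alpha,(T_x\Phi)^*\beta\bigr).
\]
For the right-hand side, we move $T_x\Phi$ across the pairing using the defining property of the transpose:
\[
\bigl\langle\beta,T_x\Phi\bigl(\Pi^\sharp_x((T_x\Phi)^*\alpha)\bigr)\bigr\rangle=\bigl\langle(T_x\Phi)^*\beta,\Pi^\sharp_x((T_x\Phi)^*\alpha)\bigr\rangle=\Pi_x\bigl((T_x\Phi)^*\alpha,(T_x\Phi)^*\beta\bigr).
\]
The two expressions agree. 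Since $\beta$ is arbitrary and the pairing is nondegenerate, the vectors $(\Phi_*\Pi)^\sharp_y(\alpha)$ and $T_x\Phi\,\Pi^\sharp_x((T_x\Phi)^*\alpha)$ coincide. As $y$ and $\alpha$ were arbitrary, this proves the identity of bundle maps.

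The computation itself is routine. The main obstacle is keeping the base points straight: on a form defined on $N$, the operator $\Phi^*$ must be evaluated at $y=\Phi(x)$ and yields a covector at $x$. This matters for how the lemma is used in Proposition~\ref{prop:conformal-XH}, where $dH$ is evaluated at $\Phi_s(x)$. I would state this convention explicitly at the start, and note that it is consistent with the pointwise remark made in that proof.
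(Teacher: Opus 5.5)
Your proposal is correct and follows essentially the same route as the paper: pair both sides with an arbitrary covector $\beta$, unwind the definitions of $\Phi_*\Pi$ and $\Pi^\sharp$, and move $T_x\Phi$ across the pairing using the transpose. Your explicit bookkeeping of base points, with $y=\Phi(x)$ and $\Phi^*=(T_x\Phi)^*$ on $T_y^*N$, makes precise what the paper's proof leaves implicit.
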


\begin{proof}
Let $\alpha, \beta \in T_q^*N$. By the definition of the pushforward of a bivector and the definition  sharp map $\Pi^\sharp$, that is, $\langle \beta, \Pi^\sharp(\alpha) \rangle = \Pi(\alpha, \beta)$:
\[
\langle \beta, (\Phi_* \Pi)^\sharp(\alpha) \rangle = (\Phi_* \Pi)(\alpha, \beta) = \Pi(\Phi^* \alpha, \Phi^* \beta) = \langle \Phi^* \beta, \Pi^\sharp(\Phi^* \alpha) \rangle.
\]
Using the defining property of the pullback for one-forms, that is,  $\langle \Phi^* \beta, v \rangle = \langle \beta, \Phi_* v \rangle$ for $v = \Pi^\sharp(\Phi^* \alpha)$, we have:
\[
\langle \Phi^* \beta, \Pi^\sharp(\Phi^* \alpha) \rangle = \langle \beta, \Phi_* (\Pi^\sharp(\Phi^* \alpha)) \rangle.
\]
Since this holds for all $\beta$, we conclude:
\[
(\Phi_* \Pi)^\sharp(\alpha) = (\Phi_* \circ \Pi^\sharp \circ \Phi^*)(\alpha).
\]
\end{proof}
\begin{lemma}[Naturality of the Symplectic Bundle Map]\label{lemm:naturality-symplectic}
Let $\Phi: M \to N$ be a diffeomorphism and $\omega \in \Omega^2(N)$. Then
\[
(\Phi^* \omega)^\flat = \Phi^* \circ \omega^\flat \circ \Phi_*.
\]
\end{lemma}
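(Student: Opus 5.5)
The statement to prove is Lemma~\ref{lemm:naturality-symplectic}: for a diffeomorphism $\Phi:M\to N$ and $\omega\in\Omega^2(N)$, we have $(\Phi^*\omega)^\flat=\Phi^*\circ\omega^\flat\circ\Phi_*$. I would mirror the proof of Lemma~\ref{lemm:naturality-poisson}, arguing pointwise and testing both sides against an arbitrary tangent vector.

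First fix $p\in M$ and $v\in T_pM$, so that $(\Phi^*\omega)^\flat(v)=i_v(\Phi^*\omega)\in T_p^*M$. Since a covector at $p$ is determined by its values on $T_pM$, it suffices to pair both sides with an arbitrary $w\in T_pM$.

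For the left-hand side, the definitions of $\flat$ and of the pullback of a $2$-form give
\[
\langle (\Phi^*\omega)^\flat(v),w\rangle=(\Phi^*\omega)_p(v,w)=\omega_{\Phi(p)}(\Phi_*v,\Phi_*w).
\]
For the right-hand side, the defining property of the pullback of one-forms, $\langle\Phi^*\beta,w\rangle=\langle\beta,\Phi_*w\rangle$, gives
\[
\langle \Phi^*\bigl(\omega^\flat(\Phi_*v)\bigr),w\rangle=\langle \omega^\flat(\Phi_*v),\Phi_*w\rangle=\omega_{\Phi(p)}(\Phi_*v,\Phi_*w).
\]
The two expressions coincide. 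Since $w$ was arbitrary, the covectors are equal at $p$, and since $p$ and $v$ were arbitrary, the bundle maps agree.

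There is no real obstacle here. The only point requiring care is bookkeeping: $\Phi_*$ acts on $T_pM$, $\omega^\flat$ is evaluated at $\Phi(p)$, and $\Phi^*$ maps $T^*_{\Phi(p)}N$ back to $T^*_pM$. Invertibility of $\Phi$ is not actually needed for this identity; it matters only for the later use, where inverses of these maps are taken.
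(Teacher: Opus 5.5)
Your proof is correct and is essentially the paper's argument: the paper likewise pairs $(\Phi^*\omega)^\flat(X)$ with an arbitrary $Y$, rewrites it as $\omega(\Phi_*X,\Phi_*Y)$ by the definition of the pullback of a $2$-form, and then applies the duality $\langle\alpha,\Phi_*Y\rangle=\langle\Phi^*\alpha,Y\rangle$. The only differences are that the paper works with vector fields rather than pointwise tangent vectors, and that your remark about invertibility is correct for the pointwise identity; the diffeomorphism hypothesis is used only when $\Phi_*X$ is pushed forward as a global vector field, as the paper's formulation implicitly does.
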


\begin{proof}
Let $X, Y \in \mathfrak{X}(M)$. By the definitions of the flat map, that is,  $\omega^\flat(X)  = \omega(X, \cdot)$, and the pullback of a 2-form:
\[
\langle (\Phi^* \omega)^\flat(X), Y \rangle = (\Phi^* \omega)(X, Y) = \omega(\Phi_* X, \Phi_* Y) = \langle \omega^\flat(\Phi_* X), \Phi_* Y \rangle.
\]
By the duality identity $\langle \alpha, \Phi_* Y \rangle = \langle \Phi^* \alpha, Y \rangle$ for the one-form $\alpha = \omega^\flat(\Phi_* X)$, we have:
\[
\langle \omega^\flat(\Phi_* X), \Phi_* Y \rangle = \langle \Phi^* (\omega^\flat(\Phi_* X)), Y \rangle.
\]

Since this holds for all test vectors $Y$, the identity $(\Phi^* \omega)^\flat(X) = (\Phi^* \circ \omega^\flat \circ \Phi_*)(X)$ is established.
\end{proof}

\printbibliography

\end{document}